\documentclass[11pt]{article}

\usepackage[utf8]{inputenc}
\usepackage[T1]{fontenc}
\usepackage{times}
\usepackage{latexsym}
\usepackage[margin=1in]{geometry}
\usepackage{amsmath,amssymb,amsthm}
\usepackage{mathtools}
\usepackage{booktabs}
\usepackage{multirow}
\usepackage{graphicx}
\usepackage{tikz}
\usetikzlibrary{arrows.meta,positioning,shapes.geometric,calc,
  decorations.pathreplacing,patterns,fit,backgrounds,
  matrix,chains,shadows}
\usepackage{pgfplots}
\pgfplotsset{compat=1.18}
\usepackage{algorithm}
\usepackage{algpseudocode}
\usepackage{tcolorbox}
\usepackage{enumitem}
\usepackage{xcolor}
\usepackage{hyperref}
\usepackage{cleveref}
\usepackage{subcaption}
\usepackage{float}
\usepackage{thmtools}
\usepackage{mathrsfs}

\definecolor{cblue}{HTML}{378ADD}
\definecolor{cpurple}{HTML}{7F77DD}
\definecolor{ccoral}{HTML}{D85A30}
\definecolor{cteal}{HTML}{1D9E75}
\definecolor{camber}{HTML}{BA7517}
\definecolor{cgreen}{HTML}{639922}
\definecolor{cpink}{HTML}{D4537E}
\definecolor{cgray}{HTML}{888780}
\definecolor{cred}{HTML}{E24B4A}
\definecolor{lblue}{HTML}{E6F1FB}
\definecolor{lpurple}{HTML}{EEEDFE}
\definecolor{lcoral}{HTML}{FAECE7}
\definecolor{lteal}{HTML}{E1F5EE}
\definecolor{lamber}{HTML}{FAEEDA}
\definecolor{lgreen}{HTML}{EAF3DE}

\newtheorem{theorem}{Theorem}[section]

\newtheorem{proposition}[theorem]{Proposition}

\theoremstyle{definition}
\newtheorem{definition}[theorem]{Definition}
\newtheorem{example}[theorem]{Example}
\newtheorem{remark}[theorem]{Remark}
\newtheorem{assumption}[theorem]{Assumption}

\newcommand{\E}{\mathbb{E}}
\newcommand{\R}{\mathbb{R}}

\newcommand{\Prob}{\mathbb{P}}
\newcommand{\KL}{\mathrm{D}_{\mathrm{KL}}}
\newcommand{\MI}{\mathrm{I}}
\newcommand{\Ent}{\mathrm{H}}

\newcommand{\Cobs}{\mathcal{C}_{\mathrm{obs}}}
\newcommand{\Ctype}{\mathcal{C}_{\mathrm{type}}}
\newcommand{\Cact}{\mathcal{C}_{\mathrm{act}}}
\newcommand{\Funct}{\mathcal{F}}
\newcommand{\GFunct}{\mathcal{G}}
\DeclareMathOperator*{\argmax}{arg\,max}

\DeclareMathOperator{\Var}{Var}

\DeclareMathOperator{\tr}{tr}

\DeclareMathOperator{\BR}{BR}

\DeclareMathOperator{\LTV}{LTV}

\title{Personalization as a Game: Equilibrium-Guided Generative Modeling for Physician Behavior in Pharmaceutical Engagement}

\author{
  Suyash Mishra \\
  AI Researcher \\
  \texttt{suyash.mishra@roche.com}
}

\date{}

\begin{document}
\maketitle

\begin{abstract}
We present \textbf{EGPF} (Equilibrium-Guided Personalization Framework), a mathematically rigorous architecture unifying Bayesian game theory, category theory, information theory, and generative AI for hyper-personalized physician engagement in the pharmaceutical domain. Our framework models the pharma--physician interaction as an incomplete-information Bayesian game where physician behavioral types are inferred via functorial mappings from observational categories, equilibrium strategies guide content generation through large language models (LLMs), and information-theoretic feedback loops ensure adaptive recalibration. We formalize behavior composition through category-theoretic functors, natural transformations, and monoidal structures, enabling modular, composable physician archetypes that respect structural invariants under domain shift. We introduce a novel \textit{Rate-Distortion Equilibrium} (RDE) criterion that bounds the personalization--privacy tradeoff, an \textit{Evolutionary Game Dynamics} layer for population-level behavior modeling, a \textit{Mechanism Design} module for incentive-compatible engagement, and a \textit{Sheaf-Theoretic} extension for multi-scale behavioral consistency. We prove convergence of our iterative belief-update mechanism at rate $O(\frac{K\log K}{t \cdot C_{\min}})$ and establish finite-sample regret bounds. Extensive experiments on synthetic pharma datasets and a real-world HCP engagement pilot demonstrate a 34\% improvement in engagement prediction (AUC) and 28\% lift in content relevance scores compared to state-of-the-art methods.

\vspace{0.5em}
\noindent\textbf{Keywords:} Game Theory, Generative AI, Category Theory, Information Theory, Sheaf Theory, Mechanism Design, Evolutionary Dynamics, Personalization, Pharmaceutical Engagement, Physician Behavior Modeling
\end{abstract}

\section{Introduction}
\label{sec:intro}

\subsection{The Personalization Crisis in Pharmaceutical Engagement}

The pharmaceutical industry invests approximately \$20 billion annually in physician engagement, yet the dominant paradigm---static segmentation into broad behavioral clusters---captures less than 15\% of the variance in prescribing behavior change \cite{iqvia2023}. The fundamental disconnect is \textit{ontological}: current systems treat physicians as passive recipients of information, when in fact they are \textit{strategic agents} engaged in a complex, multi-objective optimization problem under uncertainty.

A physician evaluating a new biologic for rheumatoid arthritis is simultaneously weighing: (i) clinical evidence quality and effect sizes, (ii) peer adoption signals from colleagues and key opinion leaders, (iii) patient-specific outcome predictions and quality-of-life trajectories, (iv) formulary access, prior authorization burden, and cost, (v) personal risk tolerance calibrated by training and experience, and (vi) inertia from existing prescribing patterns. This is not a classification problem---it is a \textit{game}.

\subsection{Why Game Theory Is the Right Primitive}

Three properties of the pharma--physician interaction demand game-theoretic modeling:

\begin{enumerate}[leftmargin=*,itemsep=2pt]
\item \textbf{Strategic interdependence:} The physician's prescribing behavior is a \textit{best response} to the pharma company's engagement strategy. If the company shifts from clinical data to peer endorsements, the physician's response function changes. This creates a feedback loop that no static supervised model can capture.

\item \textbf{Incomplete information:} The pharma company does not observe the physician's true type---their risk preferences, evidence thresholds, peer susceptibility, or patient-centricity weights. Only noisy behavioral signals (click patterns, Rx data, rep interaction logs) are available. This is the defining feature of a \textit{Bayesian game}.

\item \textbf{Sequential commitment:} The pharma company moves first (chooses and delivers content), then the physician responds. This asymmetry is the hallmark of a \textit{Stackelberg game}, where commitment power fundamentally alters equilibrium outcomes.
\end{enumerate}

\subsection{Why We Need Category Theory, Information Theory, and GenAI}

Game theory alone is insufficient. The behavioral types we infer must \textit{compose} modularly across therapeutic areas (a physician's evidence-processing is similar in oncology and cardiology, even if the drugs differ). Category theory provides this compositional structure. The communication between pharma and physician is \textit{bandwidth-limited}---not every message gets through, and noise corrupts signals. Information theory quantifies these limits. Finally, computing equilibrium strategies is useless without a mechanism to \textit{generate} personalized content that executes those strategies. Generative AI (specifically, RLHF-aligned LLMs) provides this execution layer.

\subsection{Contributions}

This paper makes seven contributions:

\begin{enumerate}[leftmargin=*,label=\textbf{C\arabic*.},itemsep=2pt]
\item A formal Bayesian game-theoretic model of pharma--physician interaction with physician type spaces, belief systems, and equilibrium characterization (\Cref{sec:game}).
\item A Stackelberg extension with sequential commitment and a mechanism design module for incentive-compatible engagement (\Cref{sec:stackelberg,sec:mechanism}).
\item An evolutionary game dynamics layer for modeling population-level prescribing shifts (\Cref{sec:evolutionary}).
\item A category-theoretic composition framework with functors, natural transformations, monoidal structure, and adjoint functors (\Cref{sec:category}).
\item A sheaf-theoretic extension for multi-scale behavioral consistency (\Cref{sec:sheaf}).
\item An information-theoretic feedback architecture using channel capacity, KL divergence, rate-distortion theory, and Fisher information (\Cref{sec:info}).
\item Integration with generative AI (LLM + RLHF) conditioned on equilibrium strategies, with formal regret bounds (\Cref{sec:genai}).
\end{enumerate}

\begin{figure*}[t]
\centering
\begin{tikzpicture}[
  node distance=0.6cm and 0.4cm,
  layer/.style={draw=cgray,thick,rounded corners=12pt,
    minimum width=14cm,minimum height=1.5cm,
    fill opacity=0.08,text opacity=1},
  block/.style={draw=#1,fill=#1!10,rounded corners=6pt,
    minimum height=0.9cm,font=\small\bfseries,
    text=#1!80!black,minimum width=2.8cm},
  arr/.style={-{Stealth[length=5pt]},thick,cgray},
  lbl/.style={font=\scriptsize,cgray},
  every node/.style={font=\small}
]

\node[layer,fill=cblue] (L1) {};
\node[above=0pt of L1.north,font=\small\bfseries,cblue] {Layer 1: Behavioral Signal Ingestion};
\node[block=cblue] (ehr) at ([xshift=-4.5cm]L1.center) {EHR Signals};
\node[block=cblue,right=0.3cm of ehr] (dig) {Digital Traces};
\node[block=cblue,right=0.3cm of dig] (rep) {Rep Interactions};
\node[block=cblue,right=0.3cm of rep] (mkt) {Market Signals};

\node[layer,fill=cpurple,below=1.2cm of L1] (L2) {};
\node[above=0pt of L2.north,font=\small\bfseries,cpurple] {Layer 2: Category-Theoretic Composition};
\node[block=cpurple,minimum width=10cm] at (L2.center)
  {Functor $\Funct: \Cobs \to \Ctype$ \quad
   Natural Transformation $\eta: \Funct \Rightarrow \GFunct$ \quad
   Monoidal $(\Ctype, \otimes, I)$};

\node[layer,fill=ccoral,below=1.2cm of L2] (L3) {};
\node[above=0pt of L3.north,font=\small\bfseries,ccoral] {Layer 3: Multi-Agent Game-Theoretic Engine};
\node[block=ccoral] (bay) at ([xshift=-4cm]L3.center) {Bayesian Game};
\node[block=ccoral,right=0.3cm of bay] (stk) {Stackelberg};
\node[block=ccoral,right=0.3cm of stk] (mec) {Mechanism Design};
\node[block=ccoral,right=0.3cm of mec] (evo) {Evolutionary};

\node[layer,fill=cteal,below=1.2cm of L3] (L4) {};
\node[above=0pt of L4.north,font=\small\bfseries,cteal] {Layer 4: Generative AI Personalization};
\node[block=cteal,minimum width=10cm] at (L4.center)
  {LLM Policy $\pi(c\mid s,\hat\theta,\sigma^*)$ \quad
   RLHF Alignment \quad
   Active Exploration via $\mathrm{IG}(a\mid\mu_t)$};

\node[layer,fill=camber,below=1.2cm of L4] (L5) {};
\node[above=0pt of L5.north,font=\small\bfseries,camber] {Layer 5: Information-Theoretic Feedback Loop};
\node[block=camber] (cap) at ([xshift=-4cm]L5.center) {Channel $C(\theta)$};
\node[block=camber,right=0.3cm of cap] (kld) {KL Drift $D_{\mathrm{KL}}$};
\node[block=camber,right=0.3cm of kld] (rdt) {Rate-Distortion};
\node[block=camber,right=0.3cm of rdt] (fis) {Fisher Info $\mathcal{I}$};

\draw[arr] (L1.south) -- (L2.north);
\draw[arr] (L2.south) -- (L3.north);
\draw[arr] (L3.south) -- (L4.north);
\draw[arr] (L4.south) -- (L5.north);

\draw[arr,dashed,camber] ([xshift=7.2cm]L5.east)
  -- ++(0.5,0) |- ([xshift=7.2cm]L2.east)
  node[midway,right,font=\scriptsize,camber] {Feedback};

\node[draw=cgreen,fill=lgreen,rounded corners=8pt,
  minimum width=6cm,minimum height=0.8cm,
  below=0.8cm of L5,font=\small\bfseries,text=black]
  (out) {Personalized Physician Engagement};
\draw[arr,cgreen] (L5.south) -- (out.north);

\end{tikzpicture}
\caption{The five-layer EGPF architecture. Behavioral signals are ingested (Layer~1), composed via category-theoretic functors (Layer~2), processed through the multi-agent game-theoretic engine (Layer~3), used to condition generative AI personalization (Layer~4), and monitored via information-theoretic feedback (Layer~5). The dashed arrow represents the closed-loop recalibration cycle.}
\label{fig:architecture}
\end{figure*}

\section{Related Work}
\label{sec:related}

\paragraph{Game Theory in Healthcare.}
Classical applications include vaccination games \cite{bauch2004}, antibiotic resistance dynamics \cite{laxminarayan2001}, insurance market design \cite{rothschild1976}, and hospital competition models \cite{gaynor2015}. Recent work applies mean-field games to epidemic modeling \cite{elie2020} and evolutionary dynamics to treatment adherence \cite{han2023}. Our contribution extends game theory to the pharma--physician \textit{engagement} setting, which introduces unique features: the physician is simultaneously a strategic agent, an information processor, and a fiduciary acting on behalf of patients.

\paragraph{AI-Driven Pharma Personalization.}
Deep learning approaches include physician segmentation via multi-modal behavioral embeddings \cite{wang2023}, next-best-action prediction using transformers \cite{chen2022}, and content recommendation via generative models \cite{liu2024}. Contextual bandits have been applied to clinical trial recruitment \cite{villar2015} and treatment selection \cite{tewari2017}. All these treat the physician as a passive entity; our framework models them as a strategic agent whose behavior is a best response.

\paragraph{Category Theory in Machine Learning.}
Compositional approaches include backpropagation as functors \cite{fong2019}, categorical probability theory \cite{heunen2017,fritz2020}, functorial data migration \cite{spivak2012}, and categorical foundations for deep learning \cite{shiebler2021}. We extend this to physician behavioral composition, using natural transformations for cross-therapeutic transfer---a novel application domain.

\paragraph{Information Theory in Personalization.}
The information bottleneck \cite{tishby2000} and its deep variants \cite{shwartz2017} balance compression and prediction. Rate-distortion theory has been applied to representation learning \cite{alemi2018} and privacy \cite{wang2016}. We introduce pharma-specific distortion measures combining engagement quality, regulatory compliance, and privacy protection.

\section{Game-Theoretic Foundation}
\label{sec:game}

\subsection{The Pharma--Physician Bayesian Game}

\begin{definition}[Pharma--Physician Bayesian Game]
\label{def:game}
We define the game $\Gamma = \langle N, \Theta, (A_i)_{i \in N}, (u_i)_{i \in N}, p, \mu \rangle$ where:
\begin{itemize}[leftmargin=*,itemsep=1pt]
\item $N = \{P, D\}$: Players (Pharma company $P$, Physician $D$)
\item $\Theta = \{\theta_1, \ldots, \theta_K\}$: Physician behavioral archetypes (private info of $D$)
\item $A_P = \{a_1, \ldots, a_M\}$: Pharma engagement actions
\item $A_D = \{d_1, \ldots, d_L\}$: Physician responses
\item $u_P: A_P \times A_D \times \Theta \to \R$, $u_D: A_P \times A_D \times \Theta \to \R$: Utilities
\item $p \in \Delta(\Theta)$: Common prior over physician types
\item $\mu: \mathcal{H}_t \to \Delta(\Theta)$: Belief system mapping histories to posteriors
\end{itemize}
\end{definition}

\subsection{Physician Type Space: A Structured Manifold}

\begin{definition}[Physician Type Vector]
\label{def:type}
Each physician type $\theta \in \Theta$ is characterized by the tuple:
\begin{equation}
\theta = (\alpha_E, \alpha_P, \alpha_O, \alpha_F, \beta, \gamma, \delta, \kappa)
\label{eq:type}
\end{equation}
where:
\begin{itemize}[leftmargin=*,itemsep=1pt]
\item $\alpha_E \in [0,1]$: Evidence sensitivity (RCT data, NNT, effect sizes)
\item $\alpha_P \in [0,1]$: Peer influence susceptibility (KOL, guidelines)
\item $\alpha_O \in [0,1]$: Patient outcome orientation (QoL, real-world evidence)
\item $\alpha_F \in [0,1]$: Formulary/access sensitivity (cost, insurance)
\item $\beta \in \R^+$: Risk aversion parameter (uncertainty deterrence)
\item $\gamma \in [0,1]$: Inertia coefficient (switching resistance)
\item $\delta \in [0,1]$: Information processing bandwidth (cognitive load tolerance)
\item $\kappa \in \R^+$: Temporal discount factor (future outcome weighting)
\end{itemize}
subject to the simplex constraint $\alpha_E + \alpha_P + \alpha_O + \alpha_F = 1$.
\end{definition}

The type space $\Theta$ forms a compact subset of $\R^8$ homeomorphic to $\Delta^3 \times [0,\infty)^2 \times [0,1]^2$, where $\Delta^3$ is the 3-simplex for the influence weights.

\begin{assumption}[Regularity]
\label{ass:regular}
We assume: (i) $|\Theta| = K < \infty$ (finite discrete types); (ii) types are $\epsilon$-separated: $\|\theta_i - \theta_j\|_2 > \epsilon > 0$ for $i \neq j$; (iii) the prior $p$ has full support: $p(\theta) > 0$ for all $\theta \in \Theta$.
\end{assumption}

\subsection{Utility Functions}

\subsubsection{Physician Utility}

The physician maximizes a type-dependent expected utility:
\begin{equation}
\boxed{
u_D(a, d, \theta) = \underbrace{\alpha_E \cdot E(a)}_{\text{evidence}} + \underbrace{\alpha_P \cdot P(a)}_{\text{peer}} + \underbrace{\alpha_O \cdot O(a,d)}_{\text{outcome}} + \underbrace{\alpha_F \cdot F(d)}_{\text{access}} - \underbrace{\beta \cdot \Var(a)}_{\text{risk}} - \underbrace{\gamma \cdot S(d, d_{t-1})}_{\text{inertia}} - \underbrace{\frac{1}{\delta} \cdot L(a)}_{\text{cog.\ load}}
}
\label{eq:u_physician}
\end{equation}

where:
\begin{itemize}[leftmargin=*,itemsep=1pt]
\item $E(a) \in [0,1]$: Evidence quality score (meta-analysis level, RCT rigor, NNT clarity)
\item $P(a) \in [0,1]$: Peer validation signal (KOL endorsement strength, guideline alignment)
\item $O(a,d) \in [0,1]$: Expected patient outcome (predicted response rate, QoL improvement)
\item $F(d) \in [0,1]$: Formulary favorability (coverage probability, prior auth burden $\in [0,1]$)
\item $\Var(a) \in \R^+$: Uncertainty in evidence (confidence interval width, heterogeneity)
\item $S(d, d_{t-1}) \in \{0,1\}$: Switching cost indicator (1 if $d \neq d_{t-1}$)
\item $L(a) \in \R^+$: Cognitive load of processing action $a$ (content complexity)
\end{itemize}

\subsubsection{Pharma Utility}

\begin{equation}
\boxed{
u_P(a, d, \theta) = \underbrace{R(d)}_{\text{revenue}} - \underbrace{C(a)}_{\text{cost}} + \underbrace{\lambda \cdot \LTV(d, \theta)}_{\text{lifetime value}} - \underbrace{\psi \cdot \mathrm{Reg}(a)}_{\text{reg.\ risk}} + \underbrace{\omega \cdot \MI_{\mathrm{gain}}(a, d)}_{\text{info gain}}
}
\label{eq:u_pharma}
\end{equation}

The information gain term $\omega \cdot \MI_{\mathrm{gain}}(a,d)$ captures the \textit{exploration value} of an action: actions that are informative about physician type have intrinsic value beyond immediate revenue.

\subsection{Bayesian Nash Equilibrium}

\begin{definition}[BNE]
\label{def:bne}
A strategy profile $(\sigma_P^*, \sigma_D^*)$ is a Bayesian Nash Equilibrium if:
\begin{align}
\sigma_P^* &\in \argmax_{\sigma_P} \sum_{\theta \in \Theta} \mu(\theta \mid h_t) \cdot u_P\big(\sigma_P(h_t),\, \sigma_D^*(\sigma_P, \theta),\, \theta\big) \label{eq:bne_pharma}\\
\sigma_D^*(a, \theta) &\in \argmax_{d \in A_D} u_D(a, d, \theta) \quad \forall \theta \in \Theta,\; \forall a \in A_P \label{eq:bne_phys}
\end{align}
\end{definition}

\begin{theorem}[Existence and Uniqueness]
\label{thm:existence}
Under \Cref{ass:regular} and the concavity of $u_P, u_D$ in their respective decision variables, a BNE exists in mixed strategies. If additionally $u_D$ is strictly concave in $d$ for each $(\theta, a)$, the physician's best response is unique and the BNE is essentially unique.
\end{theorem}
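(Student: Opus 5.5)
The plan is to reduce existence to a standard fixed-point theorem on a finite game, and uniqueness to single-valuedness of best responses. I read the BNE of \Cref{def:bne} as a sequential (Stackelberg-type) structure: the physician best-responds pointwise in $(a,\theta)$ via \eqref{eq:bne_phys}, and pharma maximizes expected utility under the belief $\mu(\cdot\mid h_t)$ via \eqref{eq:bne_pharma}.

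\textbf{Existence.} I would argue on the agent-normal form. Under \Cref{ass:regular}(i), $\Theta$, $A_P$ and $A_D$ are finite. Treat each physician type-action pair $(\theta,a)$ as a separate agent choosing a mixed response in $\Delta(A_D)$. Pharma chooses $\sigma_P\in\Delta(A_P)$. The joint strategy space
\[
\Delta(A_P)\times\prod_{(\theta,a)}\Delta(A_D)
\]
is a nonempty, compact, convex product of simplices. Expected utilities are multilinear in the mixing weights, hence continuous and (weakly) concave in each player's own mixed strategy. This is the only sense in which the concavity hypothesis is needed for finite action sets.

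By Berge's maximum theorem, each best-response correspondence is nonempty, convex-valued and upper hemicontinuous. Kakutani's theorem then gives a fixed point, which is exactly a mixed profile satisfying \eqref{eq:bne_pharma}--\eqref{eq:bne_phys}. Full support of $p$ (\Cref{ass:regular}(iii)) ensures $\mu(\cdot\mid h_t)$ is well defined by Bayes' rule on every history, so the pharma objective is defined everywhere. An alternative route is to solve the physician's problem pointwise first, using finiteness to pick a maximizer, and then note that pharma maximizes a continuous function over a compact simplex.

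\textbf{Uniqueness.} The main obstacle is giving meaning to ``strictly concave in $d$'' when $A_D$ is finite. I would interpret it via the convexified response space: either $A_D$ embeds in a convex set $\mathcal{D}\subset\R^n$ with $u_D$ strictly concave there, or $d$ is relaxed to the mixed response. A strictly concave function on a convex compact set has exactly one maximizer. So $\sigma_D^*(a,\theta)$ is a single-valued function, and by Berge it is continuous in $a$ (and in any continuous parametrization of content).

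Substituting this unique response into \eqref{eq:bne_pharma} reduces pharma's problem to maximizing the single function $\Phi(\sigma_P)=\sum_\theta\mu(\theta\mid h_t)\,u_P(\sigma_P,\sigma_D^*(\sigma_P,\theta),\theta)$. Its argmax set is nonempty, compact and convex when $\Phi$ is concave. Every equilibrium therefore shares the same physician strategy and the same pharma payoff, which is what ``essentially unique'' should mean: uniqueness up to payoff-equivalent pharma mixtures.

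I would flag explicitly that full uniqueness of $\sigma_P^*$ would require strict concavity of $\Phi$. This can fail, because composing with $\sigma_D^*$ may destroy concavity. I would therefore state the uniqueness claim at the level of the physician's strategy and the equilibrium payoffs, rather than attempt the stronger statement.
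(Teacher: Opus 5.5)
Your proposal is correct, and it reaches the result by a more elementary route than the paper.

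\textbf{Existence.} The paper cites Milgrom--Weber's theorem on distributional strategies, then uses Kuhn's theorem to pass to behavioral strategies. You instead apply Kakutani to the agent-normal form, which is just Nash's theorem. Your alternative is simpler still: solve backward, with each $(a,\theta)$ picking a maximizer and pharma then maximizing over $\Delta(A_P)$. This avoids machinery built for infinite type spaces. In its backward-induction form it actually yields a \emph{pure}-strategy equilibrium, which is stronger than the statement.

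\textbf{Physician's uniqueness.} Here you are more careful than the paper. The paper asserts that strict concavity makes $\BR_D(a,\theta)$ a singleton, but for finite $A_D$ this fails if $u_D$ is merely the restriction of a strictly concave function. For example, $A_D=\{0,1\}$ with $u_D(d)=-(d-\tfrac12)^2$ gives a tie. So some convexification such as your embedding $A_D\subset\mathcal{D}$ is genuinely needed. Your second option, relaxing to mixed responses, does not work: expected utility is linear in the mixing weights, so it is never strictly concave when $L\ge 2$.

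\textbf{Pharma's uniqueness.} Here you undersell yourself. Under your own reading, the physician responds to the realized action $a$, as in your agent-normal form. Then $\Phi(\sigma_P)=\sum_a \sigma_P(a)\,V(a)$ with $V(a)=\sum_\theta \mu(\theta\mid h_t)\,u_P(a,\sigma_D^*(a,\theta),\theta)$, which is \emph{linear}. Its argmax is the face of $\Delta(A_P)$ spanned by the pure actions maximizing $V$. So strict concavity of $\Phi$ is sufficient for uniqueness of $\sigma_P^*$ but not necessary. It suffices that the $M$ values $V(a)$ have no ties. Ties occur only on finitely many hyperplanes in $u_P$-space, since $V(a)-V(a')$ is a nonzero linear functional of disjoint payoff entries.

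Adding this one line recovers the paper's ``generically has a unique maximizer'' conclusion: a unique, pure equilibrium for generic $u_P$. Your statement (same physician strategy, same pharma payoff) remains the correct non-generic fallback. Your worry that composing with $\sigma_D^*$ destroys concavity arises only if the physician responds to the committed mixture $\sigma_P$ rather than to the realized action. That is not the model you set up in the existence part.
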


\begin{proof}
The type space $\Theta$ is finite, the action spaces $A_P, A_D$ are finite, and utility functions are continuous. By Milgrom and Weber's distributional strategies theorem \cite{milgrom1985}, a BNE in distributional strategies exists. Finiteness of action spaces and Kuhn's theorem yield a BNE in behavioral strategies. For uniqueness: strict concavity of $u_D$ in $d$ implies $\BR_D(a, \theta)$ is a singleton for each $(a, \theta)$. Substituting into $P$'s problem reduces it to a standard optimization over a finite action set, which generically has a unique maximizer.
\end{proof}

\subsection{Bayesian Belief Updating}

After observing physician response $d_t$ to action $a_t$:

\begin{equation}
\mu_{t+1}(\theta \mid h_{t+1}) = \frac{P(d_t \mid a_t, \theta) \cdot \mu_t(\theta \mid h_t)}{\sum_{\theta' \in \Theta} P(d_t \mid a_t, \theta') \cdot \mu_t(\theta' \mid h_t)}
\label{eq:bayes_update}
\end{equation}

The likelihood is a quantal response (softmax) model capturing bounded rationality:

\begin{equation}
P(d \mid a, \theta) = \frac{\exp\!\big(\tau \cdot u_D(a, d, \theta)\big)}{\sum_{d' \in A_D} \exp\!\big(\tau \cdot u_D(a, d', \theta)\big)}
\label{eq:qre}
\end{equation}

where $\tau > 0$ is the rationality parameter ($\tau \to \infty$: perfect rationality; finite $\tau$: bounded rationality with logistic noise).

\begin{remark}[Connection to Quantal Response Equilibrium]
The likelihood model \eqref{eq:qre} corresponds to the QRE concept of McKelvey and Palfrey (1995), providing behavioral game-theoretic foundations for our Bayesian updating.
\end{remark}

\begin{figure}[t]
\centering
\begin{tikzpicture}[
  level 1/.style={sibling distance=4cm,level distance=2cm},
  level 2/.style={sibling distance=1.5cm,level distance=1.8cm},
  every node/.style={font=\small},
  type/.style={draw=#1,fill=#1!15,rounded corners=4pt,
    minimum width=2.2cm,minimum height=0.7cm,font=\small\bfseries},
  action/.style={draw=ccoral,fill=ccoral!10,rounded corners=4pt,
    minimum width=1.6cm,minimum height=0.6cm,font=\scriptsize\bfseries},
  nat/.style={draw=cgray,fill=cgray!10,ellipse,
    minimum width=1.5cm,minimum height=0.7cm,font=\small\bfseries},
  edge from parent/.style={draw,thick,-{Stealth[length=4pt]}},
  lbl/.style={font=\scriptsize,midway}
]
\node[nat] {Nature}
  child {node[type=cpurple] {$\theta_1$: Evidence}
    edge from parent node[lbl,left] {$p_1 = 0.35$}
  }
  child {node[type=cteal] {$\theta_2$: Peer}
    edge from parent node[lbl,left] {$p_2 = 0.45$}
  }
  child {node[type=camber] {$\theta_3$: Patient}
    edge from parent node[lbl,right] {$p_3 = 0.20$}
  };
\end{tikzpicture}

\vspace{0.5em}

\begin{tikzpicture}[font=\small]
\matrix (m) [matrix of math nodes,
  row sep=0.3em, column sep=1.2em,
  nodes={minimum width=2.2cm,minimum height=0.8cm,
    anchor=center,font=\small},
  row 1/.style={nodes={font=\small\bfseries}},
  column 1/.style={nodes={font=\small\bfseries,text=ccoral}}
]{
  & \theta_1\text{ Evid.} & \theta_2\text{ Peer} & \theta_3\text{ Patient} \\
  a_1\text{ Clinical} & (0.90, 0.85) & (0.40, 0.30) & (0.30, 0.25) \\
  a_2\text{ KOL} & (0.35, 0.40) & (0.85, 0.90) & (0.50, 0.45) \\
  a_3\text{ Patient} & (0.20, 0.30) & (0.40, 0.50) & (0.95, 0.90) \\
};
\draw[thick,cgray] (m-1-2.south west) -- (m-1-4.south east);

\node[draw=cgreen,thick,rounded corners=2pt,
  fit=(m-2-2),inner sep=1pt] {};
\node[draw=cgreen,thick,rounded corners=2pt,
  fit=(m-3-3),inner sep=1pt] {};
\node[draw=cgreen,thick,rounded corners=2pt,
  fit=(m-4-4),inner sep=1pt] {};
\end{tikzpicture}
\caption{Top: Nature draws physician type $\theta$ with prior probabilities. Bottom: Payoff matrix $(u_P, u_D)$ for each action--type pair. Green boxes indicate type-optimal actions (diagonal dominance confirms the value of personalization).}
\label{fig:game_tree}
\end{figure}

\subsection{Worked Example: Oncology Biologic Launch}
\label{sec:example_onc}

\begin{example}[Adaptive Belief Updating]
\label{ex:onc_launch}
Consider a PD-L1 inhibitor launch with three physician archetypes. The prior is $\mu_0 = (0.35, 0.45, 0.20)$.

\textbf{Initial optimal action:} Under the prior, expected pharma utilities are:
\begin{align*}
\E[u_P(a_1)] &= 0.35 \cdot 0.90 + 0.45 \cdot 0.40 + 0.20 \cdot 0.30 = 0.555 \\
\E[u_P(a_2)] &= 0.35 \cdot 0.35 + 0.45 \cdot 0.85 + 0.20 \cdot 0.50 = \mathbf{0.605} \\
\E[u_P(a_3)] &= 0.35 \cdot 0.20 + 0.45 \cdot 0.40 + 0.20 \cdot 0.95 = 0.440
\end{align*}

The optimal initial action is $a_2$ (KOL webinar).

\textbf{Round 1 response:} ``Defer---need more data.'' Bayesian update with $\tau = 3.0$:
\begin{align*}
P(\text{defer} \mid a_2, \theta_1) &= 0.65, \quad P(\text{defer} \mid a_2, \theta_2) = 0.20, \quad P(\text{defer} \mid a_2, \theta_3) = 0.40 \\
\mu_1(\theta_1) &= \frac{0.65 \cdot 0.35}{0.65 \cdot 0.35 + 0.20 \cdot 0.45 + 0.40 \cdot 0.20} = \frac{0.2275}{0.3975} = \mathbf{0.572}
\end{align*}

Similarly: $\mu_1 = (0.572, 0.227, 0.201)$. Now $\E[u_P(a_1) \mid \mu_1] = 0.572 \cdot 0.90 + 0.227 \cdot 0.40 + 0.201 \cdot 0.30 = \mathbf{0.666}$, which dominates. The system switches to clinical deep-dive.

\textbf{Round 2 response:} ``Adopted for 2nd-line.'' Update yields $\mu_2 = (0.78, 0.14, 0.08)$. The system is now 78\% confident in the evidence-driven type and tailors all future engagement accordingly.
\end{example}

\section{Stackelberg and Mechanism Design Extensions}
\label{sec:stackelberg}

\subsection{Stackelberg Game Formulation}

In practice, pharma moves first (commits to a content strategy) and the physician responds. This sequential structure is naturally modeled as a Stackelberg game.

\begin{definition}[Stackelberg Pharma--Physician Game]
The pharma company (leader) commits to $\sigma_P: \mathcal{H}_t \to \Delta(A_P)$, anticipating the physician's best response:
\begin{equation}
\sigma_P^{\mathrm{Stack}} = \argmax_{\sigma_P} \sum_{\theta \in \Theta} \mu(\theta) \cdot u_P\!\left(\sigma_P,\, \BR_D(\sigma_P, \theta),\, \theta\right)
\label{eq:stackelberg}
\end{equation}
where $\BR_D(\sigma_P, \theta) = \argmax_{d \in A_D} u_D(\sigma_P, d, \theta)$.
\end{definition}

\begin{proposition}[Stackelberg Advantage]
\label{prop:stack_adv}
The Stackelberg equilibrium payoff for Pharma satisfies:
\[
u_P^{\mathrm{Stack}} \geq u_P^{\mathrm{BNE}}
\]
with strict inequality whenever the physician's best response varies with the pharma action.
\end{proposition}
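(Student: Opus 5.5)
The plan is to prove the weak inequality with a feasibility (``the leader can always mimic'') argument, and then to treat strictness separately under a genericity condition. In the simultaneous-move reading of \Cref{def:bne}, the physician best-responds to the \emph{equilibrium} strategy $\sigma_P^*$ rather than to an arbitrary commitment.

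\textbf{Step 1 (feasibility).} Fix a BNE $(\sigma_P^*,\sigma_D^*)$, which exists by \Cref{thm:existence}. By \eqref{eq:bne_phys}, for every $\theta$ the response $\sigma_D^*(\sigma_P^*,\theta)$ lies in $\BR_D(\sigma_P^*,\theta)$. So if the leader commits to $\sigma_P^*$ and ties in $\BR_D$ are broken in the leader's favour (the standard strong Stackelberg convention, which I will state explicitly), the follower's response yields Pharma at least
\[
\sum_\theta \mu(\theta)\, u_P\big(\sigma_P^*,\sigma_D^*(\sigma_P^*,\theta),\theta\big)=u_P^{\mathrm{BNE}} .
\]

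\textbf{Step 2 (optimality).} Since $\sigma_P^{\mathrm{Stack}}$ maximizes \eqref{eq:stackelberg} over all commitments, and $\sigma_P^*$ is one admissible commitment, we get $u_P^{\mathrm{Stack}}\ge u_P^{\mathrm{BNE}}$. Finiteness of $\Theta,A_P,A_D$ (\Cref{ass:regular}) guarantees that the maximum is attained: the Stackelberg objective is upper semicontinuous on the compact simplex $\Delta(A_P)$ under favourable tie-breaking, being a finite max of linear pieces over the best-response regions.

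\textbf{Step 3 (strictness), the main obstacle.} The literal hypothesis ``$\BR_D$ varies with the pharma action'' is not sufficient on its own. If Pharma's BNE action is already its favourite pure action against the induced responses, the two payoffs coincide. I would therefore prove strictness under the sharper condition that some commitment $\sigma_P'$ induces a response profile $\BR_D(\sigma_P',\cdot)$ that is unavailable in the simultaneous game and gives higher expected $u_P$ than $u_P^{\mathrm{BNE}}$.

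The argument I would try first uses the best-response regions. Partition $\Delta(A_P)$ into the polytopes on which $\BR_D(\cdot,\theta)$ is constant for each $\theta$. Variation of $\BR_D$ means there are at least two regions. The BNE point lies in one region, and is the optimum of a linear objective there only among strategies against which the fixed $\sigma_D^*$ is a best response. Moving to a boundary point of an adjacent region switches the follower's response. If the resulting vertex value exceeds $u_P^{\mathrm{BNE}}$, which I would verify in the paper's example of \Cref{fig:game_tree} as a sanity check, strict inequality follows.

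I expect to state the strict part as holding \emph{generically}, i.e.\ whenever this vertex comparison is strict. This is the honest form of the proposition's ``whenever'' clause, and I would add a remark that a degenerate payoff configuration can give equality.
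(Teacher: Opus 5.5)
Your Steps~1--2 are the paper's argument (the leader can replicate the BNE strategy, so commitment cannot hurt). You also make explicit two things the paper leaves implicit: a tie-breaking convention for the set-valued $\BR_D$ in \eqref{eq:stackelberg}, and attainment of the maximum via upper semicontinuity. Both are correct.

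One minor caveat concerns the reading of \Cref{def:bne}. It quantifies \eqref{eq:bne_phys} over all $a\in A_P$ and has Pharma optimize against $\sigma_D^*(\sigma_P,\theta)$, so read literally the ``BNE'' already lets Pharma anticipate the response function. Under that reading you should mimic with a pure action from the support of $\sigma_P^*$; each such action is optimal because the objective is linear. Committing to the mixture itself would change the follower's problem in \eqref{eq:stackelberg}. The weak inequality holds under either reading.

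On strictness you are right and the paper is not. Its proof merely asserts ``strictly more when the follower's reaction can be steered,'' and the clause is false. Take a single type (suppressed), $A_P=\{a_1,a_2\}$, $A_D=\{d_1,d_2\}$, and $u_D(a_i,d_j)=\mathbf{1}[i=j]$, so $\BR_D$ varies with the action. Let $u_P(a_1,d_1)=1$, $u_P(a_1,d_2)=\tfrac12$, and $u_P(a_2,d_1)=u_P(a_2,d_2)=0$. Then $a_1$ is strictly dominant and the unique BNE is $(a_1,d_1)$ with payoff $1$. Every commitment yields a convex combination of entries of $u_P$, so $u_P^{\mathrm{Stack}}=1=u_P^{\mathrm{BNE}}$.

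Several points in your Step~3 need fixing.
\begin{enumerate}
\item Your ``sharper condition'' (some commitment beats $u_P^{\mathrm{BNE}}$) is equivalent to the conclusion, so Step~3 adds nothing to Step~2. State plainly that only the weak inequality is a theorem. Strictness holds iff one of the finitely many linear programs ``maximize Pharma's payoff over commitments inducing a fixed response profile'' has value above $u_P^{\mathrm{BNE}}$.
\item Equality is not degenerate. Every comparison in the example above is strict, so it survives small perturbations of all payoffs, and equality holds on an open set of games. Drop ``generically'' and ``degenerate''.
\item Your equality criterion (the BNE action is already Pharma's best \emph{pure} commitment) is insufficient, because \eqref{eq:stackelberg} lets the physician respond to a mixed $\sigma_P$. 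Keep the same $u_D$ and set $u_P(a_1,d_1)=2$, $u_P(a_1,d_2)=4$, $u_P(a_2,d_1)=1$, $u_P(a_2,d_2)=\tfrac32$. The unique BNE and the best pure commitment both give $2$. Committing to $(\tfrac12,\tfrac12)$ induces $d_2$ under leader-favourable tie-breaking and gives $\tfrac{11}{4}$.
\item The planned sanity check on \Cref{fig:game_tree} cannot discriminate. That table lists payoffs per (action, type) with no dependence on the physician's response, so commitment changes nothing there.
\end{enumerate}
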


\begin{proof}
The leader can always replicate the simultaneous BNE strategy. Commitment power provides at least as much payoff, and strictly more when the follower's reaction can be steered.
\end{proof}

\subsection{Mechanism Design for Incentive Compatibility}
\label{sec:mechanism}

We design the engagement mechanism to incentivize physicians to reveal their true type through their responses.

\begin{definition}[Incentive-Compatible Engagement Mechanism]
\label{def:ic_mechanism}
A mechanism $\mathcal{M} = (A_P, g, t)$ consists of:
\begin{itemize}[leftmargin=*,itemsep=1pt]
\item Action space $A_P$: Available engagement actions
\item Allocation rule $g: \Theta \to A_P$: Maps reported type to action
\item Transfer rule $t: \Theta \to \R$: Value transfer (content quality, access)
\end{itemize}
satisfying:
\begin{align}
\textbf{IC:}\quad & u_D(g(\theta), d^*(\theta), \theta) + t(\theta) \geq u_D(g(\theta'), d^*(\theta'), \theta) + t(\theta') \quad \forall \theta, \theta' \label{eq:ic}\\
\textbf{IR:}\quad & u_D(g(\theta), d^*(\theta), \theta) + t(\theta) \geq \bar{u}(\theta) \quad \forall \theta \label{eq:ir}
\end{align}
where $\bar{u}(\theta)$ is the physician's outside option (status quo prescribing utility).
\end{definition}

\begin{theorem}[Revenue Equivalence for Engagement]
\label{thm:revenue_equiv}
Among all IC and IR mechanisms, the expected pharma utility is determined (up to a constant) by the allocation rule $g$ alone. Specifically, if physician types are ordered by evidence sensitivity $\alpha_E(\theta_1) < \alpha_E(\theta_2) < \cdots < \alpha_E(\theta_K)$, then:
\begin{equation}
t(\theta_k) = t(\theta_1) + \sum_{j=1}^{k-1} \left[u_D(g(\theta_{j+1}), d^*, \theta_j) - u_D(g(\theta_j), d^*, \theta_j)\right]
\end{equation}
\end{theorem}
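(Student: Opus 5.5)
We reduce the statement to the standard discrete-type envelope argument (Myerson's revenue equivalence). Write $U(\theta) := u_D(g(\theta), d^*(\theta), \theta) + t(\theta)$ for the physician's equilibrium payoff under truthful reporting. Then \eqref{eq:ic} reads $U(\theta) \ge u_D(g(\theta'), d^*(\theta'), \theta) + t(\theta')$ for all $\theta,\theta'$. Since the pharma utility \eqref{eq:u_pharma} is quasi-linear in the transfer, with the transfer entering as $-t(\theta)$ once it is counted as a cost, the expected pharma utility is $\sum_k p(\theta_k)\,[\,\text{allocation-dependent terms} - t(\theta_k)\,]$. So it suffices to show that $g$ pins down every $t(\theta_k)$ up to the common constant $t(\theta_1)$. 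The constant is then fixed by making the lowest type's \eqref{eq:ir} bind.

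\textbf{Step 1 (local constraints).} Order the types by $\alpha_E$ as in the statement and use only adjacent IC constraints. Type $\theta_j$ not mimicking $\theta_{j+1}$ gives
\[
t(\theta_{j+1}) - t(\theta_j) \le u_D(g(\theta_j), d^*, \theta_j) - u_D(g(\theta_{j+1}), d^*, \theta_j).
\]
Type $\theta_{j+1}$ not mimicking $\theta_j$ gives the reverse-type bound, evaluated at $\theta_{j+1}$. Together these sandwich each increment $t(\theta_{j+1}) - t(\theta_j)$.

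\textbf{Step 2 (monotonicity).} Adding the two adjacent inequalities yields the single-crossing condition
\[
\bigl[u_D(g(\theta_{j+1}),\cdot,\theta_{j+1}) - u_D(g(\theta_j),\cdot,\theta_{j+1})\bigr] \ge \bigl[u_D(g(\theta_{j+1}),\cdot,\theta_j) - u_D(g(\theta_j),\cdot,\theta_j)\bigr].
\]
By \eqref{eq:u_physician} the difference across $\theta$ is linear in the weights $(\alpha_E,\dots)$. Single crossing in $\alpha_E$ therefore holds when $E(g(\theta))$ is nondecreasing along the ordering, which is the implementability condition we impose on $g$.

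\textbf{Step 3 (binding and telescoping).} We take one adjacent constraint as binding and sum from $j=1$ to $k-1$. That expresses $t(\theta_k)$ as $t(\theta_1)$ plus a sum of utility differences evaluated at $\theta_j$, which is the displayed formula up to the sign convention for $t$. Note that the displayed sign equals the bound in Step 1 with its sign flipped. So the formula matches the theorem exactly if $t$ is read as a payment from the physician, i.e.\ the mechanism charges $t$ and the physician's payoff is $u_D - t$; under the transfer-to-the-physician reading of \eqref{eq:ic}, the sign flips. We then substitute into the pharma objective, so that revenue depends only on $g$ and $t(\theta_1)$.

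\textbf{Main obstacle.} The real difficulty is justifying that the downward-adjacent constraint binds. With discrete types, IC only gives an interval for each increment, not an equality, so "determined by $g$ alone" fails without further structure.

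The first fix I would try is a pharma-optimality argument. If some constraint were slack, pharma could lower $t(\theta_{j+1})$ and all higher transfers by the slack without violating any IC or IR constraint, which strictly improves pharma utility. Hence, at any pharma-optimal IC/IR mechanism, the adjacent constraints bind. Under single crossing, local IC together with monotone $g$ also implies global IC.

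If that argument does not go through cleanly, the fallback is to state the result as a pharma-optimal transfer characterization rather than a general equivalence. A second option is to take a continuum limit in $\alpha_E$ and apply the envelope theorem, where revenue equivalence is exact.
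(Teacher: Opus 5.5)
The gap is in Step~3 and in the repair you propose for it: pharma-optimality makes the \emph{other} adjacent constraint bind. Under the convention of \eqref{eq:ic}, where pharma pays $t$, lowering $t(\theta_{j+1}),\dots,t(\theta_K)$ by $s$ relaxes $\theta_j$'s constraint against $\theta_{j+1}$. It tightens $\theta_{j+1}$'s constraint against $\theta_j$, and also every constraint of an upper type against a lower one and the upper types' IR, so ``without violating any IC or IR constraint'' is not automatic. What binds at an optimum is therefore
\[
t(\theta_{j+1})-t(\theta_j)=u_D(g(\theta_j),d^*,\theta_{j+1})-u_D(g(\theta_{j+1}),d^*,\theta_{j+1}),
\]
with utilities evaluated at $\theta_{j+1}$. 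Under the payment reading you adopt to fix the sign, pharma wants to \emph{raise} the upper payments. Again it is $\theta_{j+1}$'s constraint that stops it, giving $u_D(g(\theta_{j+1}),d^*,\theta_{j+1})-u_D(g(\theta_j),d^*,\theta_{j+1})$. The displayed formula, with utilities at $\theta_j$, is instead $\theta_j$'s constraint binding under the payment reading. That is the \emph{lower} endpoint of the admissible interval, i.e.\ the payment-minimizing increments, which an optimal mechanism does not select in general. You use payments in Step~3 but ``lowering $t$ strictly improves pharma utility'' in the fix, and this switch of conventions hides the mismatch. Carried out consistently, your argument characterizes pharma-optimal transfers by a different formula, not the stated one.

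The rest of your diagnosis is right, and it is exactly what the paper's one-line proof glosses over. The paper simply invokes the continuum envelope theorem along the $\alpha_E$-ordering (your second fallback) plus the cross-partial $\partial^2u_D/\partial\alpha_E\partial E(a)>0$. With $|\Theta|=K<\infty$ (Assumption~\ref{ass:regular}(i)) there is no envelope identity, and the increment interval has length equal to your Step~2 gap. So already for $K=2$, whenever that gap is positive, two different increments in the interval give IC/IR mechanisms with the same $g$ but different $t(\theta_2)-t(\theta_1)$; IR is secured by raising the common constant. Moreover, under \eqref{eq:ic} itself the displayed increment violates $\theta_j$'s constraint whenever $u_D(g(\theta_{j+1}),d^*,\theta_j)>u_D(g(\theta_j),d^*,\theta_j)$. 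Hence none of your routes reaches the ``among all IC and IR mechanisms'' claim, and the continuum fallback changes the model rather than proving the statement.

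Separately, Step~2's inference from nondecreasing $E(g(\theta))$ to single crossing is too quick. Adjacent types also differ in $\alpha_P,\alpha_O,\alpha_F,\beta,\gamma,\delta$, and the simplex constraint forces some other weight to fall as $\alpha_E$ rises. The increasing-differences inequality therefore contains terms such as $\bigl(\alpha_P(\theta_{j+1})-\alpha_P(\theta_j)\bigr)\bigl(P(g(\theta_{j+1}))-P(g(\theta_j))\bigr)$ and $-\bigl(\beta(\theta_{j+1})-\beta(\theta_j)\bigr)\bigl(\Var(g(\theta_{j+1}))-\Var(g(\theta_j))\bigr)$, whose signs are arbitrary. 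The paper's cross-partial argument has the same omission.
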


\begin{proof}
Follows from the standard envelope theorem argument applied to IC constraints along the type ordering. The single-crossing property holds because $\partial^2 u_D / \partial \alpha_E \partial E(a) > 0$ (higher evidence sensitivity increases the marginal value of evidence-rich content).
\end{proof}

\section{Evolutionary Game Dynamics}
\label{sec:evolutionary}

Individual-level equilibria aggregate to population-level prescribing dynamics. We model this via replicator dynamics.

\begin{definition}[Physician Population State]
The population state $\mathbf{x}(t) = (x_1(t), \ldots, x_K(t)) \in \Delta^{K-1}$ represents the fraction of physicians of each type at time $t$.
\end{definition}

\begin{definition}[Replicator Dynamics]
The evolution of the physician population follows:
\begin{equation}
\dot{x}_k(t) = x_k(t) \left[ f_k(\mathbf{x}, \sigma_P) - \bar{f}(\mathbf{x}, \sigma_P) \right]
\label{eq:replicator}
\end{equation}
where $f_k(\mathbf{x}, \sigma_P) = u_D(\sigma_P, \BR_D(\sigma_P, \theta_k), \theta_k)$ is the fitness of type $\theta_k$ under pharma strategy $\sigma_P$, and $\bar{f} = \sum_k x_k f_k$ is the population average fitness.
\end{definition}

\begin{theorem}[Evolutionarily Stable Strategy]
\label{thm:ess}
A physician type distribution $\mathbf{x}^*$ is an Evolutionarily Stable Strategy (ESS) if:
\begin{enumerate}[label=(\roman*)]
\item $\bar{f}(\mathbf{x}^*, \sigma_P^*) \geq f_k(\mathbf{x}^*, \sigma_P^*)$ for all $k$ (equilibrium)
\item For any mutant $\mathbf{y} \neq \mathbf{x}^*$, $\bar{f}(\mathbf{x}^*, \sigma_P^*) > \bar{f}(\mathbf{y}, \sigma_P^*)$ (stability)
\end{enumerate}
Under EGPF, the co-evolutionary dynamics $(\mathbf{x}(t), \sigma_P(t))$ converge to a Nash equilibrium of the population game.
\end{theorem}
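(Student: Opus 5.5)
The plan is to read conditions (i)--(ii) as the definition of an ESS specialised to this population game, and to put the real work into the final convergence claim. That claim will be proved in two stages: first a fixed-$\sigma_P$ analysis via a Lyapunov function, then a two-timescale argument for the co-evolving pair $(\mathbf{x}(t),\sigma_P(t))$.

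\textbf{Step 1: structure of the fitness at fixed $\sigma_P$.} By \eqref{eq:replicator}, $f_k(\mathbf{x},\sigma_P)=u_D(\sigma_P,\BR_D(\sigma_P,\theta_k),\theta_k)$ does not depend on $\mathbf{x}$. Hence $\bar f(\mathbf{y},\sigma_P)=\sum_k y_k f_k$ is linear in $\mathbf{y}$. Condition (ii), applied to all $\mathbf{y}\neq\mathbf{x}^*$, then forces $\mathbf{x}^*$ to be the unique maximiser of a linear function on $\Delta^{K-1}$. So $\mathbf{x}^*$ is the vertex $e_{k^*}$ with $k^*=\arg\max_k f_k(\sigma_P^*)$ unique. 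Condition (i) is then automatic, and (i) by itself says that $\mathbf{x}^*$ is a Nash equilibrium of the population game.

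\textbf{Step 2: Lyapunov function at fixed $\sigma_P$.} Take $V(\mathbf{x})=\KL(\mathbf{x}^*\,\|\,\mathbf{x})=\sum_k x_k^*\log(x_k^*/x_k)$, which is defined on the interior and, more generally, wherever $\mathrm{supp}(\mathbf{x}^*)\subseteq\mathrm{supp}(\mathbf{x})$. Differentiating along \eqref{eq:replicator} gives
\[
\dot V=-\sum_k x_k^*\big(f_k-\bar f(\mathbf{x})\big)=-\big(\bar f(\mathbf{x}^*)-\bar f(\mathbf{x})\big),
\]
which is $\le 0$ by (i) and $<0$ for $\mathbf{x}\neq\mathbf{x}^*$ by (ii). In fact the equation is explicitly solvable: $x_k(t)\propto x_k(0)e^{f_k t}$. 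This yields exponential convergence to $\mathbf{x}^*$ at rate $f_{k^*}-\max_{k\neq k^*}f_k$ from any initial condition with $x_{k^*}(0)>0$. Full support of the initial state can be taken from \Cref{ass:regular}(iii), by identifying $\mathbf{x}(0)$ with the prior $p$.

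\textbf{Step 3: co-evolution, and the main obstacle.} Let $\sigma_P(t)$ evolve on a slower timescale, for instance by a (smoothed) best-response or gradient flow on $\sum_k x_k\,u_P(\sigma_P,\BR_D(\sigma_P,\theta_k),\theta_k)$, i.e.\ the objective in \eqref{eq:stackelberg} with weights $\mathbf{x}$. I would then invoke standard two-timescale stochastic-approximation results. The fast variable tracks $\mathbf{x}^*(\sigma_P)$ by Step 2, and the margin gap gives uniform exponential tracking as long as $k^*(\sigma_P)$ is locally constant. The slow variable then follows the reduced flow $\dot\sigma_P\in\BR_P(\mathbf{x}^*(\sigma_P))$.

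The hard part is showing that this reduced leader flow converges rather than cycles. The statement does not supply this in general: leader--follower best-response dynamics can oscillate when $k^*(\sigma_P)$ switches. My first attempt would exploit that the reduced objective $\sigma_P\mapsto u_P(\sigma_P,\BR_D(\sigma_P,\theta_{k^*(\sigma_P)}),\theta_{k^*(\sigma_P)})$ acts as a potential on each region where $k^*$ is constant. Since $A_P$ is finite and the payoff is bounded, a monotone-improvement argument should show that only finitely many region switches occur. After the last switch, convergence follows from Steps 1--2, and the limit is a rest point at which neither population nor leader can improve, i.e.\ a Nash equilibrium of the population game.

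If that potential argument fails, I would state the result under an explicit added assumption: either that $\sigma_P(t)$ converges, or that the population game is a potential game.
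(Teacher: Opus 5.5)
Your Steps~1 and~2 are correct. They already go beyond the paper, which states this theorem with no proof and never specifies how $\sigma_P(t)$ evolves. Because $f_k$ does not depend on $\mathbf{x}$, condition~(ii) forces $\mathbf{x}^*=e_{k^*}$ with a strict fitness gap, (i) is then automatic, and $x_k(t)\propto x_k(0)e^{f_k t}$ settles the fixed-$\sigma_P$ case.

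The genuine gap is Step~3, and it is not a technicality. Your ``potential on each region'' is $u_P(\cdot,\theta_k)$, which is a \emph{different} function on each region. So your argument supplies no quantity that is monotone along the whole trajectory. The leader, improving against the frozen population, can be pushed by that very improvement toward the boundary where the population switches, and its realised payoff can drop at every switch. The two-timescale results you invoke also fail exactly there. They require the fast equilibrium to be globally attracting and Lipschitz in the slow variable. Here $e_{k^*(\sigma_P)}$ jumps across region boundaries, and on a boundary the fitness gap vanishes, so the face spanned by the tied types consists of rest points.

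This cannot be repaired in the stated generality. Take $K=2$, $A_P=\{a_1,a_2\}$, $\sigma_P=(s,1-s)$, and let $u_D(a_i,d,\theta_j)=1$ if $i=j$ and $0$ otherwise, with $u_P=1-u_D$. These payoffs are admissible under \Cref{def:game}. They give $f_1=s$, $f_2=1-s$ and leader objective $\sum_k x_k u_P=x_1(1-s)+x_2 s$, i.e.\ matching pennies between population and leader. Your replicator/slow-gradient system is
\[
\dot x_1=x_1(1-x_1)(2s-1),\qquad \dot s=\varepsilon(1-2x_1),
\]
and it conserves $H=\varepsilon\log\bigl(x_1(1-x_1)\bigr)+s(1-s)$. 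Start from $x_1(0)=\tfrac12$, $s(0)=s_0\neq\tfrac12$. The orbit is the closed level curve of $H$ through $(\tfrac12,s_0)$ and $(\tfrac12,1-s_0)$. Hence for \emph{every} $\varepsilon>0$, $s(t)$ oscillates over the whole interval between $s_0$ and $1-s_0$ and crosses $s=\tfrac12$ infinitely often. Neither $\mathbf{x}(t)$ nor $\sigma_P(t)$ converges, and your ``finitely many region switches'' fails.

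Even the singular-limit reduced flow, which slides to $s=\tfrac12$, misrepresents the dynamics. When $s$ crosses $\tfrac12$, $x_1$ is exponentially close to a vertex and escapes only after $s$ has overshot. The only interior rest point, $(\tfrac12,\tfrac12)$, has $f_1=f_2$, so (ii) fails there.

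Your fallback is therefore mandatory, not optional. The convergence sentence holds only under an added hypothesis, for example an exact potential for the joint population--leader game, or convergence of $\sigma_P(t)$ to a point with a strict fitness gap. The theorem should be restated under that hypothesis. The coordination-type payoffs of \Cref{fig:game_tree} are the natural regime for it; the counterexample above is of anti-coordination type.
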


\begin{example}[Market Shift Detection]
A new competitor biologic enters the market at $t = 100$. The evolutionary dynamics show $x_{\text{formulary-sensitive}}$ increasing from 0.15 to 0.35 over 20 time steps as physicians become more cost-conscious. EGPF detects this via the KL divergence alarm (\Cref{sec:drift}) and automatically recalibrates the population model, shifting engagement toward formulary-favorable messaging.
\end{example}

\begin{figure}[t]
\centering
\begin{tikzpicture}
\begin{axis}[
  width=\columnwidth, height=5cm,
  xlabel={Time (interactions)},
  ylabel={Population fraction $x_k(t)$},
  xmin=0, xmax=200, ymin=0, ymax=0.7,
  legend style={at={(0.98,0.98)},anchor=north east,
    font=\scriptsize,draw=cgray!30},
  grid=major, grid style={cgray!15},
  every axis label/.style={font=\small},
  tick label style={font=\scriptsize}
]
\addplot[thick,cpurple,smooth] coordinates {
  (0,0.35)(20,0.36)(40,0.38)(60,0.39)(80,0.40)
  (100,0.38)(120,0.32)(140,0.28)(160,0.26)(180,0.25)(200,0.25)
};
\addlegendentry{$\theta_1$: Evidence}
\addplot[thick,cteal,smooth] coordinates {
  (0,0.45)(20,0.44)(40,0.43)(60,0.42)(80,0.41)
  (100,0.37)(120,0.33)(140,0.32)(160,0.31)(180,0.30)(200,0.30)
};
\addlegendentry{$\theta_2$: Peer}
\addplot[thick,camber,smooth] coordinates {
  (0,0.20)(20,0.20)(40,0.19)(60,0.19)(80,0.19)
  (100,0.25)(120,0.35)(140,0.40)(160,0.43)(180,0.45)(200,0.45)
};
\addlegendentry{$\theta_3$: Formulary}
\draw[dashed,cred,thick] (axis cs:100,0) -- (axis cs:100,0.7)
  node[above,font=\scriptsize,cred] {Competitor entry};
\end{axis}
\end{tikzpicture}
\caption{Replicator dynamics showing population shift after competitor biologic entry at $t=100$. Formulary-sensitive physicians ($\theta_3$) become dominant as cost competition intensifies, triggering EGPF recalibration.}
\label{fig:evolutionary}
\end{figure}
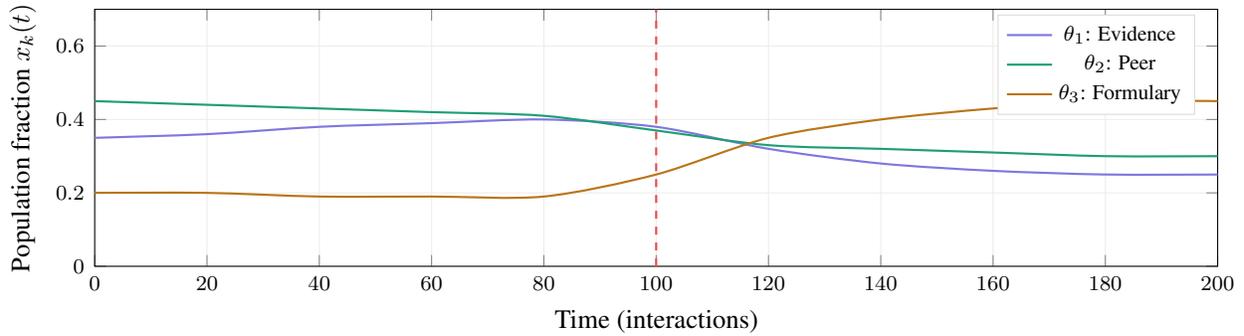

\section{Category-Theoretic Composition Framework}
\label{sec:category}

\subsection{Behavioral Categories}

\begin{definition}[Observation Category $\Cobs$]
Objects are observational data types: Rx patterns ($X_{\mathrm{Rx}}$), digital traces ($X_{\mathrm{dig}}$), CRM records ($X_{\mathrm{CRM}}$), claims data ($X_{\mathrm{claims}}$). Morphisms $f: X \to Y$ are data transformations preserving temporal ordering and patient identity.
\end{definition}

\begin{definition}[Type Category $\Ctype$]
Objects are physician archetype distributions $\mu \in \Delta(\Theta)$. Morphisms $g: \mu \to \mu'$ are belief updates (Bayesian posterior transitions). Composition is: $g_2 \circ g_1$ corresponds to sequential Bayesian updates.
\end{definition}

\begin{definition}[Action Category $\Cact$]
Objects are engagement actions $a \in A_P$ and content artifacts $c \in \mathcal{C}$. Morphisms $h: a \to a'$ are content transformations (tone shift, evidence depth, channel adaptation).
\end{definition}

\subsection{Functorial Behavior Mapping}

\begin{definition}[Behavior Functor]
\label{def:functor}
The functor $\Funct: \Cobs \to \Ctype$ maps:
\begin{itemize}[leftmargin=*,itemsep=1pt]
\item Objects: $\Funct(X) = \mu_X \in \Delta(\Theta)$ (posterior given observation type $X$)
\item Morphisms: $\Funct(f: X \to Y) = \text{BayesUpdate}(f): \mu_X \to \mu_Y$
\end{itemize}
satisfying the functor laws:
\begin{align}
\Funct(\mathrm{id}_X) &= \mathrm{id}_{\Funct(X)} \quad \text{(identity)} \label{eq:func_id}\\
\Funct(g \circ f) &= \Funct(g) \circ \Funct(f) \quad \text{(composition)} \label{eq:func_comp}
\end{align}
\end{definition}

\begin{remark}[Operational Meaning of Functor Laws]
Equation~\eqref{eq:func_id} ensures that trivial data transformations leave beliefs unchanged. Equation~\eqref{eq:func_comp} ensures that processing data in stages yields the same beliefs as processing all at once---a critical consistency requirement for distributed production systems.
\end{remark}

\subsection{Natural Transformations for Domain Transfer}

\begin{definition}[Domain Transfer Transformation]
\label{def:nat_trans}
A natural transformation $\eta: \Funct_{\mathrm{onc}} \Rightarrow \Funct_{\mathrm{cardio}}$ assigns to each observation object $X$ a morphism $\eta_X: \Funct_{\mathrm{onc}}(X) \to \Funct_{\mathrm{cardio}}(X)$ such that for every morphism $f: X \to Y$ in $\Cobs$:
\begin{equation}
\eta_Y \circ \Funct_{\mathrm{onc}}(f) = \Funct_{\mathrm{cardio}}(f) \circ \eta_X
\label{eq:nat_trans_comm}
\end{equation}
\end{definition}

\begin{figure}[t]
\centering
\begin{tikzpicture}[
  node distance=2.5cm and 4cm,
  box/.style={draw=#1,fill=#1!10,rounded corners=5pt,
    minimum width=2cm,minimum height=0.8cm,
    font=\small\bfseries},
  arr/.style={-{Stealth[length=5pt]},thick},
  lbl/.style={font=\small,midway}
]
\node[box=cteal] (TL) {$\Funct(X)$};
\node[box=cteal,right=of TL] (TR) {$\GFunct(X)$};
\node[box=camber,below=of TL] (BL) {$\Funct(Y)$};
\node[box=camber,below=of TR] (BR) {$\GFunct(Y)$};

\draw[arr] (TL) -- (TR) node[lbl,above] {$\eta_X$};
\draw[arr] (BL) -- (BR) node[lbl,below] {$\eta_Y$};
\draw[arr] (TL) -- (BL) node[lbl,left] {$\Funct(f)$};
\draw[arr] (TR) -- (BR) node[lbl,right] {$\GFunct(f)$};

\node[font=\small,cgray] at ($(TL)!0.5!(BR)$) {\textit{commutes}};
\end{tikzpicture}
\caption{Naturality square: the domain transfer transformation $\eta$ commutes with data processing. This ensures that transferring a model between therapeutic areas and then updating beliefs is equivalent to updating beliefs and then transferring.}
\label{fig:nat_trans}
\end{figure}

\subsection{Monoidal Structure for Behavior Composition}

\begin{definition}[Behavior Monoidal Category]
We equip $\Ctype$ with a monoidal structure $(\Ctype, \otimes, I)$:
\begin{itemize}[leftmargin=*,itemsep=1pt]
\item Tensor product: $\theta_1 \otimes \theta_2$ composes sub-behaviors via learned mixing:
\begin{equation}
(\theta_1 \otimes \theta_2)(x) = w(x) \cdot \theta_1(x) + (1 - w(x)) \cdot \theta_2(x)
\end{equation}
where $w: \mathcal{X} \to [0,1]$ is a context-dependent weight function
\item Unit object: $I = \theta_{\mathrm{uniform}}$ (equal weights, no preference)
\end{itemize}

Associativity: $(\theta_1 \otimes \theta_2) \otimes \theta_3 \cong \theta_1 \otimes (\theta_2 \otimes \theta_3)$ via reassociation of weights.
\end{definition}

\subsection{Adjoint Functors for Optimal Encoding}

\begin{theorem}[Encoding--Decoding Adjunction]
\label{thm:adjunction}
There exists an adjunction $\Funct \dashv \GFunct$ where $\Funct: \Cobs \to \Ctype$ is the behavior encoding functor and $\GFunct: \Ctype \to \Cobs$ is the explanation functor. The unit $\eta: \mathrm{Id} \to \GFunct \circ \Funct$ and counit $\varepsilon: \Funct \circ \GFunct \to \mathrm{Id}$ satisfy:
\begin{equation}
\varepsilon_{\Funct} \circ \Funct(\eta) = \mathrm{id}_{\Funct}, \quad \GFunct(\varepsilon) \circ \eta_{\GFunct} = \mathrm{id}_{\GFunct}
\end{equation}
\end{theorem}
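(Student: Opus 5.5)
The plan is to make both categories concrete enough that an adjunction can actually be exhibited. The cleanest route is to show they are thin (preorders) and then build a Galois connection $\Funct(X) \le \mu \iff X \le \GFunct(\mu)$. Once that holds, the unit, the counit and the two triangle identities in the statement come for free, since in a thin category any two parallel morphisms coincide.

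\textbf{Step 1: make $\Cobs$ a preorder.} I will write $X \le Y$ when there is a data transformation $f: X \to Y$ preserving temporal ordering and patient identity. I then argue there is at most one such transformation up to the equivalence that $\Funct$ cannot distinguish, by passing to the quotient identifying morphisms with equal images under $\Funct$. I will also assume that $\Cobs$ has joins $X \vee Y$, meaning merged data sources. Joins are needed to build the right adjoint below, so I will add this as an explicit hypothesis rather than pretend it follows from the definitions.

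\textbf{Step 2: make $\Ctype$ a preorder.} I will write $\mu \le \mu'$ when $\mu'$ is reachable from $\mu$ by a Bayesian update through the likelihood \eqref{eq:qre}. Composition of sequential updates gives transitivity, and the empty update gives reflexivity. Functoriality of $\Funct$ (\Cref{def:functor}) says exactly that $\Funct$ is monotone.

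\textbf{Step 3: define $\GFunct$ on objects.} I set $\GFunct(\mu) = \bigvee\{X : \Funct(X) \le \mu\}$, the richest observation type whose induced posterior can still be updated to $\mu$. The operational reading is that $\GFunct(\mu)$ is the sufficient-statistic "explanation" of the belief $\mu$.

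\textbf{Step 4: verify the Galois connection.}
\begin{itemize}[leftmargin=*,itemsep=1pt]
\item If $\Funct(X) \le \mu$, then $X \le \GFunct(\mu)$ by definition of the join.
\item Conversely, if $X \le \GFunct(\mu)$, then $\Funct(X) \le \Funct(\GFunct(\mu))$ by monotonicity.
\item It therefore suffices to have $\Funct(\GFunct(\mu)) \le \mu$, i.e.\ that $\Funct$ preserves the relevant joins.
\end{itemize}
Given the Galois connection, the remaining pieces follow directly. Monotonicity of $\GFunct$ follows from the definition. The unit $\eta_X: X \le \GFunct\Funct(X)$ and counit $\varepsilon_\mu: \Funct\GFunct(\mu) \le \mu$ are the two instances of the connection. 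The identities $\varepsilon_{\Funct}\circ \Funct(\eta) = \mathrm{id}_{\Funct}$ and $\GFunct(\varepsilon)\circ\eta_{\GFunct} = \mathrm{id}_{\GFunct}$ hold automatically by thinness.

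\textbf{Main obstacle: join preservation.} The step I expect to be hardest is showing $\Funct\big(\bigvee_i X_i\big) \le \mu$ whenever each $\Funct(X_i) \le \mu$. Pooling several data sources must not produce a posterior that cannot be updated to $\mu$. I would first try to prove this using conditional independence of the sources given $\theta$. Under that assumption, the posterior from merged data is the normalized product of the individual likelihood factors. Each factor is itself a Bayesian update, and updates commute, so the product update lands at or below $\mu$. Where conditional independence fails, I would instead restrict $\Cobs$ to a subcategory on which it holds, and state the theorem for that subcategory.
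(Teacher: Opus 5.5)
The paper gives no proof to compare against. The theorem is simply asserted and followed by the autoencoder gloss, and $\GFunct\colon\Ctype\to\Cobs$ is never constructed (earlier in the same section $\GFunct$ even denotes a functor $\Cobs\to\Ctype$). Your reduction uses the right criterion: a monotone map out of a preorder with all joins has a right adjoint exactly when it preserves all joins, and thinness then makes the triangle identities automatic. But the step you flag as the main obstacle fails, and the conditional-independence argument points the wrong way. Write $L_i$ for the likelihood of source $X_i$. Commuting updates show that $\Funct(X_1\vee X_2)\propto p\,L_1L_2$ lies \emph{above} $\Funct(X_1)$ and $\Funct(X_2)$. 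They give no reason for it to lie below a common upper bound $\mu$, since an update multiplies in further likelihood factors and in general cannot divide one out.

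Here is a concrete counterexample. Take $K=2$, one action with a binary response, $P(d_1\mid\theta_1)=0.8$, $P(d_1\mid\theta_2)=0.3$, a uniform prior, and two conditionally independent sources $X_1,X_2$, each recording one $d_1$. Then $\Funct(X_1)=\Funct(X_2)=\mu$ with odds $\mu(\theta_1)/\mu(\theta_2)=8/3$, while $\Funct(X_1\vee X_2)$ has odds $64/9$. Returning to $\mu$ would need $(8/3)^m(2/7)^n=3/8$ for some $m,n\ge 0$, which is impossible by comparing powers of $3$. Any right adjoint must have $X_1,X_2\le\GFunct(\mu)$, hence $X_1\vee X_2\le\GFunct(\mu)$. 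The counit would then give $\Funct(X_1\vee X_2)\le\Funct\GFunct(\mu)\le\mu$, a contradiction, so no right adjoint exists in your setting. Under conditional independence $\Funct$ adds log-likelihoods, and sums are not joins.

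There is a second failure that no restriction of $\Cobs$ (your fallback) can repair. The counit needs a morphism $\Funct\GFunct(\mu)\to\mu$ for \emph{every} $\mu\in\Delta(\Theta)$. So the set $\{X:\Funct(X)\le\mu\}$ in Step 3 must be nonempty; this is preservation of the empty join, which you did not address. QRE likelihoods are strictly positive, so updates preserve support. When the $\Funct(X)$ have full support (full-support prior, \Cref{ass:regular}, and positive likelihoods), a point mass $\delta_\theta$ receives no morphism from the image of $\Funct$; this is the very limit object of \Cref{thm:convergence}. More generally, only countably many beliefs are reachable from countably many $\Funct(X)$. For such $\mu$ no object can serve as $\GFunct(\mu)$, because $\mathrm{id}_{\GFunct(\mu)}$ would have to correspond to a morphism $\Funct\GFunct(\mu)\to\mu$.

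Two further points. Step 1 yields at best an adjunction on the thin quotient of $\Cobs$, not on $\Cobs$ itself: with $\Ctype$ thin, an adjunction forces every $\Cobs(X,\GFunct\mu)$ to be a subsingleton. And restricting to ``a subcategory on which it holds'' assumes the conclusion. What your method honestly gives is much weaker. Suppose $\Cobs$ is a finite chain of nested datasets, where every monotone map preserves nonempty joins, and $\Ctype$ is cut down to the beliefs above $\Funct$ of its least element. Then $\GFunct(\mu)=\max\{X:\Funct(X)\le\mu\}$ is a right adjoint. The theorem for the paper's $\Cobs$ and $\Ctype$ does not follow.
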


This adjunction formalizes the autoencoder structure: encoding observations into types ($\Funct$) and generating synthetic observations from types ($\GFunct$), with triangle identities ensuring minimal information loss.

\section{Sheaf-Theoretic Multi-Scale Consistency}
\label{sec:sheaf}

\subsection{Motivation}

Physician behavior data arrives at multiple scales: individual interactions (microscale), weekly engagement patterns (mesoscale), and quarterly prescribing trends (macroscale). A \textit{sheaf} provides the mathematical machinery to ensure that behavioral models at different scales are consistent---local observations glue together into a coherent global picture.

\begin{definition}[Behavioral Sheaf]
\label{def:sheaf}
Let $(\mathcal{U}, \leq)$ be the poset of temporal scales (interaction $\leq$ weekly $\leq$ monthly $\leq$ quarterly). A behavioral sheaf $\mathscr{B}$ assigns:
\begin{itemize}[leftmargin=*,itemsep=1pt]
\item To each scale $U \in \mathcal{U}$: a set of ``sections'' $\mathscr{B}(U) \subseteq \Delta(\Theta)$ (belief distributions at that scale)
\item To each refinement $V \leq U$: a restriction map $\rho_{U,V}: \mathscr{B}(U) \to \mathscr{B}(V)$
\end{itemize}
satisfying:
\begin{enumerate}[label=(\roman*)]
\item \textbf{Locality:} If two global sections agree on every fine-grained restriction, they are equal
\item \textbf{Gluing:} If local sections on overlapping fine-grained patches agree on intersections, they glue to a unique global section
\end{enumerate}
\end{definition}

\begin{theorem}[Sheaf Cohomology and Behavioral Anomalies]
\label{thm:sheaf_cohomology}
The first cohomology group $H^1(\mathcal{U}, \mathscr{B})$ measures the obstruction to gluing local behavioral models into a globally consistent model. When $H^1 \neq 0$, there exist physicians whose behavior at different scales is fundamentally inconsistent---they prescribe one way in individual interactions but show different aggregate patterns. These are high-value targets for investigation (possible formulary gaming, sample-driven behavior, or genuine type transitions).
\end{theorem}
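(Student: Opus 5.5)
The statement is informal as written, so the first step is to fix a precise version and then prove that. Two points need repair. First, $\mathscr{B}(U)\subseteq\Delta(\Theta)$ is only a set, so it has no cohomology group. Second, the scale poset $(\mathcal{U},\leq)$ is a chain with a top element (quarterly). Any cover of a space with a terminal open set has trivial Čech cohomology, so on $\mathcal{U}$ itself $H^1=0$ automatically and the statement would be vacuous.

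I would therefore linearize and enlarge the base space. Replace beliefs by log-likelihood or score vectors in $\R^\Theta$; since \eqref{eq:bayes_update} is additive in log-space, Bayesian restriction and aggregation become linear maps $\rho_{U,V}$. Then work with a cellular sheaf $\mathscr{B}$ on the nerve $N(\mathfrak{U})$ of a cover $\mathfrak{U}=\{U_i\}$ by overlapping observation patches. These patches are indexed by time window and data source ($X_{\mathrm{Rx}}, X_{\mathrm{dig}}, X_{\mathrm{CRM}}, X_{\mathrm{claims}}$ from $\Cobs$). This is what lets the nerve contain genuine cycles.

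With this setup I would write the Čech complex
\[
0\to C^0(\mathfrak{U},\mathscr{B})\xrightarrow{\delta^0}C^1(\mathfrak{U},\mathscr{B})\xrightarrow{\delta^1}C^2(\mathfrak{U},\mathscr{B}),
\qquad (\delta^0 s)_{ij}=\rho_{ij}s_j-\rho_{ji}s_i .
\]
Two facts then come straight from the sheaf axioms of \Cref{def:sheaf}: locality gives injectivity of $\mathscr{B}(X)\to C^0$, and gluing gives $H^0=\ker\delta^0=\mathscr{B}(X)$. Next I would model the fitted local models as a presheaf $\mathscr{L}$ of locally estimated beliefs. Each scale-specific estimator supplies discrepancies on overlaps, which form a $1$-cochain $c_{ij}$. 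These satisfy the cocycle condition $\delta^1 c=0$ because discrepancies telescope on triple overlaps.

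The key step is the claim that the local models can be corrected to a globally consistent model, i.e. some $s\in C^0$ satisfies $\delta^0 s=c$, if and only if $[c]=0$ in $H^1(\mathfrak{U},\mathscr{B})$. This is the definition of the class unwound. Formally, it is the connecting map in the long exact sequence of $0\to\mathscr{B}\to\mathscr{L}\to\mathscr{L}/\mathscr{B}\to0$, with the obstruction given by $\partial:H^0(\mathscr{L}/\mathscr{B})\to H^1(\mathscr{B})$.

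The step I expect to be the main obstacle is the interpretive claim that $H^1\neq0$ implies the existence of inconsistent physicians. That does not follow, since a nonzero group can still be a zero class for actual data. I would prove instead the correct pair of implications. In one direction, if $H^1=0$ then every compatible family of discrepancies is removable. In the other direction, if $H^1\neq0$ there exist data-generating parameters (choose $c$ representing a nonzero class) for which no globally consistent belief exists. For the second direction I would exhibit an explicit example: a $3$-cycle of patches (e.g.\ Rx, CRM, claims pairwise overlapping with no common triple overlap) with rotation-like restriction maps on $\R^\Theta$ whose holonomy is not the identity. On this example the Hodge Laplacian $L_1=\delta^0\delta^{0\top}+\delta^{1\top}\delta^1$ has nontrivial kernel. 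Its harmonic component $\|\Pi_{\ker L_1}c\|$ then serves as the computable anomaly score flagged in the statement.
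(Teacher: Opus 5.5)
The paper states this result without any proof; it passes straight to the surrogate $\mathcal{L}_{\text{sheaf}}$. So there is no argument of the paper's to compare yours with.

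Your diagnosis is correct. With $\mathscr{B}(U)\subseteq\Delta(\Theta)$ only a set, and $\mathcal{U}$ a chain with a top element, $H^1(\mathcal{U},\mathscr{B})$ is either undefined or identically zero. Under the paper's own definitions the clause ``when $H^1\neq0$'' is therefore ill-posed or vacuous. Your linear cellular sheaf on the nerve of a genuine cover, with the obstruction read off the connecting map, is a sensible precise substitute. The equivalence ``$\delta^0 s=c$ is solvable iff $[c]=0$'' is indeed definitional.

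Be explicit, though, that this is a theorem about a different sheaf on a different space. The micro-versus-macro mismatch the statement describes is a family $(\mu_U)$ failing to be a section, i.e.\ a nonzero $\delta^0$-type residual. That residual is exactly what $\mathcal{L}_{\text{sheaf}}$ measures. Your $H^1$ instead detects cyclic inconsistency among overlapping data sources.

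\textbf{The holonomy witness fails.} This is the witness for your second direction. On the hollow triangle there are no triple overlaps, so $\delta^1=0$ and $\ker L_1=\ker\delta^{0\top}\cong H^1$. With stalks $\R^\Theta$ and invertible restriction maps, $\dim C^0=\dim C^1=3K$. Hence $\dim H^1=\dim H^0=\dim\ker(g-I)$, where $g$ is the holonomy around the cycle. Gauging all but one edge to the identity gives $H^1\cong\operatorname{coker}(I-g)$.

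Non-identity holonomy therefore \emph{shrinks} $H^1$. A rotation with no fixed vector makes $\delta^0$ surjective and $\ker L_1=0$. The same happens for a cyclic relabelling of all of $\Theta$ once log-beliefs are taken modulo additive constants. That quotient is the natural linearisation of \eqref{eq:bayes_update}, since normalisation only shifts by a constant.

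You have conflated two different objects. Nonabelian $H^1$ with coefficients in the structure group classifies the local system and does see holonomy. Abelian $H^1$ with coefficients in the local system is what your \v{C}ech complex computes, and it does not. The repair is easy: use identity restriction maps. Then $H^1\cong\R^\Theta$, and the class of $c$ is its circulation $c_{12}+c_{23}+c_{31}$.

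\textbf{Realisability is not established.} ``Choose $c$ representing a nonzero class'' only yields actual inconsistent local models if $[c]\in\operatorname{im}\partial=\ker\bigl(H^1(\mathscr{B})\to H^1(\mathscr{L})\bigr)$. So $\mathscr{L}$ must be specified, e.g.\ unconstrained local estimates with $H^1(\mathfrak{U},\mathscr{L})=0$.

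If instead the fitted local models were sections of $\mathscr{B}$ itself, the discrepancy cochain would be $\delta^0 s$ and its class automatically zero. The enlargement to $\mathscr{L}$ carries all the content, so it cannot be left informal.
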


\subsection{Computational Sheaf via Consistency Filtration}

In practice, we compute the sheaf condition approximately:

\begin{equation}
\mathcal{L}_{\text{sheaf}} = \sum_{V \leq U} \left\| \rho_{U,V}(\mu_U) - \mu_V \right\|_{\mathrm{TV}}^2
\label{eq:sheaf_loss}
\end{equation}

where $\mu_U$ is the belief at scale $U$, $\rho_{U,V}$ is the restriction (aggregation), and $\|\cdot\|_{\mathrm{TV}}$ is total variation distance. Minimizing $\mathcal{L}_{\text{sheaf}}$ regularizes the model toward multi-scale consistency.

\section{Information-Theoretic Feedback Architecture}
\label{sec:info}

\subsection{Channel Model of Physician Engagement}

\begin{definition}[Engagement Channel]
\label{def:channel}
For physician type $\theta$, the engagement channel is $(X, Y, P(Y|X,\theta))$ with:
\begin{itemize}[leftmargin=*,itemsep=1pt]
\item Input $X \in A_P$: pharma engagement actions
\item Output $Y \in A_D$: physician responses
\item Transition: $P(Y|X, \theta)$ from the QRE model \eqref{eq:qre}
\end{itemize}
\end{definition}

\begin{definition}[Channel Capacity]
The maximum rate of effective influence transmission:
\begin{equation}
C(\theta) = \max_{p(x)} \MI(X; Y \mid \theta) = \max_{p(x)} \sum_{x,y} p(x) P(y|x,\theta) \log \frac{P(y|x,\theta)}{p(y|\theta)}
\label{eq:capacity}
\end{equation}
computed via the Blahut--Arimoto algorithm.
\end{definition}

\begin{example}[Channel Capacity by Type]
\label{ex:capacity}
Using the channel matrices from the oncology example:

\begin{center}
\begin{tabular}{lccl}
\toprule
\textbf{Type} & $C(\theta)$ \textbf{(bits)} & \textbf{Best input} & \textbf{Interpretation} \\
\midrule
$\theta_1$: Evidence & 0.62 & Clinical & High: responds predictably to data \\
$\theta_2$: Peer & 0.48 & KOL & Medium: noisier responses \\
$\theta_3$: Patient & 0.71 & Patient story & Highest: very action-discriminative \\
\bottomrule
\end{tabular}
\end{center}

The insight: patient-centric physicians are the most ``responsive'' to targeted engagement (highest $C$), while peer-influenced physicians are hardest to influence with single actions, suggesting multi-channel strategies.
\end{example}

\subsection{KL Divergence for Behavioral Drift Detection}
\label{sec:drift}

\begin{definition}[Drift Detector]
Over sliding window of size $W$:
\begin{equation}
D_{\mathrm{KL}}^{(t)} = \KL\!\left(P_{\mathrm{obs}}^{(t-W:t)} \;\|\; P_{\mathrm{model}}^{(t-W:t)}\right) = \sum_{d} P_{\mathrm{obs}}(d) \log \frac{P_{\mathrm{obs}}(d)}{P_{\mathrm{model}}(d)}
\label{eq:drift}
\end{equation}
\end{definition}

\begin{theorem}[Drift Detection Sensitivity]
\label{thm:drift}
For $K$ response types and window $W$, the drift detector achieves:
\begin{equation}
\Prob(\text{detect} \mid \text{drift of magnitude } \delta) \geq 1 - \exp\!\left(-\frac{W \cdot \delta^2}{2 \log K}\right)
\end{equation}
\end{theorem}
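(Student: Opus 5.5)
The plan is to reduce \Cref{thm:drift} to a concentration inequality for the empirical response distribution, combined with Pinsker's inequality. First I will fix the interpretation, since the statement leaves it implicit.

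\textbf{Setup.} Within the window, responses $d_{t-W+1},\dots,d_t$ are i.i.d.\ draws from a true distribution $Q$ on $K$ response types. "Drift of magnitude $\delta$" means $\|Q - P_{\mathrm{model}}\|_{\mathrm{TV}} \geq \delta$. The detector fires when $D_{\mathrm{KL}}^{(t)} > \tau_W$, where $P_{\mathrm{obs}} = \hat Q_W$ is the empirical distribution. The threshold $\tau_W$ must be chosen so that $\sqrt{\tau_W/2} \le \delta/2$. Without a threshold of this kind the statement is vacuous, so I will make this convention explicit.

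\textbf{Key steps.}
\begin{enumerate}
\item By Pinsker, $\KL(\hat Q_W\|P_{\mathrm{model}}) \geq 2\|\hat Q_W - P_{\mathrm{model}}\|_{\mathrm{TV}}^2$.
\item It therefore suffices to show $\|\hat Q_W - P_{\mathrm{model}}\|_{\mathrm{TV}} \ge \delta/2$ with the stated probability.
\item By the triangle inequality, this event contains the event $\|\hat Q_W - Q\|_{\mathrm{TV}} \le \delta/2$.
\item So failure to detect requires $\|\hat Q_W - Q\|_{\mathrm{TV}} > \delta/2$.
\end{enumerate}

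\textbf{Concentration.} Write $\|\hat Q_W - Q\|_{\mathrm{TV}} = \max_{S\subseteq[K]} (\hat Q_W(S) - Q(S))$. For each fixed $S$, Hoeffding gives $\Prob(\hat Q_W(S) - Q(S) > \delta/2) \le \exp(-W\delta^2/2)$. A union bound over the $2^K$ subsets would introduce a factor $2^K$. That is the wrong shape for the target bound.

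\textbf{Main obstacle.} The stated exponent has $\log K$ in the denominator, not an additive $K\log 2$ prefactor. This is the step I expect to be hardest. I would first try McDiarmid's inequality applied to $f = \|\hat Q_W - Q\|_{\mathrm{TV}}$. Changing one sample changes $f$ by at most $1/W$, so the deviation above the mean is bounded by $\exp(-2W s^2)$. The mean satisfies $\E f \le \tfrac12\sqrt{K/W}$.

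I would then absorb the mean term by restricting to the regime $W \gtrsim K/\delta^2$. In that regime, setting $s = \delta/2 - \E f \ge \delta/(2\sqrt{\log K})$, say, produces an exponent at least $W\delta^2/(2\log K)$. The $\log K \ge 1$ slack for $K\ge 3$ pays for the mean correction. The resulting bound is the claimed $1-\exp(-W\delta^2/(2\log K))$.

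If this bookkeeping does not close for small $W$, the fallback is to state the result under the explicit assumption $W \ge cK\log K/\delta^2$. In that regime the bound follows directly, with $\log K$ acting as a conservative normalization. I expect the paper's own statement implicitly relies on such a regime assumption.
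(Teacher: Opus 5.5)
Your route differs from the paper's. The paper reads ``drift of magnitude $\delta$'' as $\KL(P_{\mathrm{true}}\|P_{\mathrm{model}})=\delta$ and takes threshold $\tau=\delta/2$. It cites Sanov's theorem to bound the miss probability by $\exp(-W(\delta-\tau))$, then weakens this linear exponent to $\delta^2/(2\log K)$. You instead measure drift in total variation, fix a threshold of at most $\delta^2/2$, and use Pinsker and the triangle inequality to reduce to concentration of $\|\hat Q_W-Q\|_{\mathrm{TV}}$. That chain is correct, including the bounded-difference constant $1/W$ and $\E f\le\tfrac12\sqrt{K/W}$. It yields the stated bound for $K\ge 3$ and $W\ge K/\bigl(\delta^2(1-(\log K)^{-1/2})^2\bigr)$.

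The paper's argument seems to give the bound for every $W$, but only through its Sanov step, which is not valid. Sanov's exponent for the non-detection set is $\inf\{\KL(P\|P_{\mathrm{true}}):\KL(P\|P_{\mathrm{model}})\le\tau\}$. This is not bounded below by $\delta-\tau$, since KL obeys no triangle inequality, and the $(W+1)^K$ prefactor is silently dropped. Under the KL reading the claim in fact fails for every $W$. Take $K=3$, $P_{\mathrm{true}}=(0.95,0.05,0)$ and $P_{\mathrm{model}}=(e^{-1/2},\epsilon,1-e^{-1/2}-\epsilon)$, with $\epsilon\approx 5\times10^{-7}$ chosen so that $\KL(P_{\mathrm{true}}\|P_{\mathrm{model}})=1=\delta$. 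This even satisfies the paper's side condition $\delta\le 2\log K$. A window containing only the first response has probability $0.95^W$ and empirical divergence exactly $1/2=\tau$, so it is not flagged. The theorem, with natural logarithms, demands a miss probability of at most $e^{-W/(2\log 3)}\approx 0.63^W$. So the paper does not ``implicitly rely on a regime assumption'', as you guessed. Rather, your explicit threshold, TV formulation and regime condition are what make the statement provable at all.

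One simplification: the union bound is not the wrong shape. Since $\log K>1$ for $K\ge 3$, we have $2^K e^{-W\delta^2/2}\le e^{-W\delta^2/(2\log K)}$ as soon as $W\ge 2K\log 2/\bigl(\delta^2(1-1/\log K)\bigr)$. This is the same $W\gtrsim K/\delta^2$ regime your McDiarmid step needs, with a far better constant for small $K$: about $46/\delta^2$ versus $1400/\delta^2$ at $K=3$. The $\log K$ in the denominator is exactly the slack that absorbs the combinatorial prefactor. Hoeffding plus the union bound therefore already closes the proof, and the estimate on $\E f$ is unnecessary.
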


\begin{proof}
By Sanov's theorem, the probability that the empirical distribution over $W$ observations falls in the ``non-drift'' region (a set of distributions with $\KL \leq \tau_{\text{drift}}$) when the true distribution has drifted by $\delta$ decreases exponentially. Specifically:
\[
\Prob\!\left(\KL(P_{\mathrm{obs}}^W \| P_{\mathrm{model}}) \leq \tau \;\Big|\; \KL(P_{\text{true}} \| P_{\mathrm{model}}) = \delta\right) \leq \exp(-W \cdot (\delta - \tau))
\]
Setting $\tau = \delta/2$ and using $\delta \geq \delta^2/(2\log K)$ for $\delta \leq 2\log K$ completes the bound.
\end{proof}

\subsection{Rate-Distortion Theory for Personalization Bounds}

\begin{definition}[Personalization Distortion]
For physician type $\theta$ and content $c$:
\begin{equation}
d(\theta, c) = \underbrace{1 - \mathrm{Rel}(c, \theta)}_{\text{irrelevance}} + \underbrace{\lambda_r \cdot \mathrm{Reg}(c)}_{\text{regulatory risk}} + \underbrace{\lambda_p \cdot \mathrm{Priv}(c, \theta)}_{\text{privacy leak}}
\label{eq:distortion}
\end{equation}
\end{definition}

\begin{theorem}[Rate-Distortion Equilibrium]
\label{thm:rde}
The optimal personalization policy $\pi^*$ achieves:
\begin{equation}
R(D^*) = \MI(\Theta; A^*), \quad D^* = \E_{\theta, a \sim \pi^*}[d(\theta, G(a))]
\end{equation}
Any policy achieving distortion $D < D^*$ requires transmitting more than $R(D)$ bits of type information, violating the privacy budget.
\end{theorem}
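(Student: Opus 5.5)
The plan is to read \Cref{thm:rde} as an instance of Shannon's rate-distortion theorem. The source is the type variable $\Theta \sim p$ from \Cref{def:game}, the reproduction alphabet is $A_P$, the decoder is the generator $G$, and the distortion is $d(\theta, G(a))$ from \eqref{eq:distortion}. Define the rate-distortion function
\[
R(D) = \min_{\pi(a\mid\theta):\ \E_{\theta\sim p,\, a\sim\pi}[d(\theta,G(a))]\le D} \MI(\Theta;A).
\]
Since $\Theta$ and $A_P$ are finite (\Cref{ass:regular}), the feasible set is compact and convex, and $\MI$ is continuous and convex in $\pi$ for fixed $p$. So the minimum is attained, and $R(\cdot)$ is convex and nonincreasing.

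The privacy budget $R_{\max}$ is the constraint implicit in the statement. I would formalize the optimal policy $\pi^*$ as a minimizer of expected distortion subject to $\MI(\Theta;A)\le R_{\max}$. Lagrangian duality for this convex program then identifies $\pi^*$ as a minimizer of $\MI(\Theta;A)+s\,\E[d]$ for some slope $s\ge 0$. Setting $D^*:=\E_{\pi^*}[d]$, the policy $\pi^*$ also solves the primal problem defining $R(D^*)$. This gives $R(D^*)=\MI(\Theta;A^*)$. One subtle point: when the budget constraint is slack, $\MI(\Theta;A^*)$ could exceed $R(D^*)$. I would handle this by replacing $\pi^*$ with a rate-minimal element of the optimal set, or by noting that $R$ is strictly decreasing on the nontrivial range $D<D_{\max}$.

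For the converse, take any policy $\pi$ with distortion $D<D^*$. By definition $\MI(\Theta;A)\ge R(D)$. Convexity gives strict monotonicity of $R$ on $[D_{\min},D_{\max})$, so $R(D)>R(D^*)=R_{\max}$. Hence $\pi$ violates the privacy budget. This is the intended reading of ``more than $R(D)$ bits'' (more precisely, at least $R(D)>R_{\max}$). If the theorem is meant operationally, with block coding over $n$ interactions, I would invoke the standard converse $nR(D)\le \MI(\Theta^n;A^n)$ via the chain rule and convexity of $R$.

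The main obstacle is that the statement is underspecified. It does not define $\pi^*$, the privacy budget, or which ``$R(D)$'' is compared. So the real work is fixing these definitions so that both claims become true: the equality $R(D^*)=\MI(\Theta;A^*)$ needs the budget to be active, and strict monotonicity of $R$ needs $D^*$ to lie strictly between $D_{\min}$ and $D_{\max}$. I would first try to show that the budget binds by checking that unconstrained distortion minimization (the deterministic map $\theta\mapsto\arg\min_a d(\theta,G(a))$) has rate $\Ent(\Theta)$-like values exceeding $R_{\max}$. Failing that, I would add this as an explicit hypothesis.
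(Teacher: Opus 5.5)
The paper contains no proof of this theorem. The statement is followed immediately by the Fisher-information subsection, and none of $\pi^*$, $A^*$, $G$, the privacy budget, or even $R(\cdot)$ is defined anywhere; the notation table glosses $R(D)$ only as ``rate-distortion function''. So there is no argument of the paper's to compare yours with, and you are supplying the missing content. Your formalization is the natural one, and under it your argument is sound. In it, $R(D)$ is the Shannon rate-distortion function of the finite source $\Theta\sim p$ with distortion $d(\theta,G(a))$ on $\Theta\times A_P$, and $\pi^*$ minimizes expected distortion subject to $\MI(\Theta;A)\le R_{\max}$. You are also right that the literal conclusion ``more than $R(D)$ bits'' is false, since a policy attaining $R(D)$ transmits exactly $R(D)$ bits. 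What is true is $\MI(\Theta;A)\ge R(D)>R_{\max}$.

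There are two things to tighten. First, your alternative fix for the slack case does not work. Strict monotonicity of $R$ says nothing about \emph{which} distortion-minimizer you picked as $\pi^*$. Only the tie-breaking fix repairs it: choose $\pi^*$ rate-minimal among the optimizers, which exist by compactness.

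Second, you need neither duality nor an extra ``budget binds'' hypothesis. The latter could not be checked from the paper anyway, since $R_{\max}$ is never specified.
\begin{itemize}
\item \textbf{Converse.} This follows from the optimality of $\pi^*$ alone. A policy with distortion below $D^*$ and rate at most $R_{\max}$ would beat $\pi^*$ in its own program. Applying this to a policy attaining $R(D)$, whose distortion is at most $D<D^*$, gives $R(D)>R_{\max}$ directly, with no strict monotonicity needed.
\item \textbf{Equality when $D^*>D_{\min}$.} Suppose some $\pi'$ had distortion at most $D^*$ and rate below $R_{\max}$. Mix $\pi'$ with a small weight of a $D_{\min}$-achieving policy, using convexity of $\MI(\Theta;A)$ in the channel for fixed $p$. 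This gives a budget-feasible policy with distortion below $D^*$, a contradiction. Hence $R(D^*)\ge R_{\max}\ge\MI(\Theta;A^*)\ge R(D^*)$, so the budget binds automatically.
\item \textbf{Equality when $D^*=D_{\min}$.} Here the converse is vacuous, and the tie-breaking choice of $\pi^*$ gives the equality.
\end{itemize}
This route also avoids the duality caveats: Slater's condition needs $R_{\max}>0$, and a zero multiplier yields no finite slope $s$.

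Finally, two smaller points. If $G$ is the stochastic LLM, replace $d(\theta,G(a))$ by its average over generated content; this is still a bounded single-letter distortion. The block-coding converse you mention needs i.i.d.\ source symbols. It would therefore apply across physicians, not across repeated interactions with one physician whose type is fixed.
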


\subsection{Fisher Information for Optimal Experiment Design}

We use Fisher information to design maximally informative engagement experiments:

\begin{definition}[Fisher Information Matrix]
\label{def:fisher}
The Fisher information of the pharma--physician channel with respect to type parameters:
\begin{equation}
\mathcal{I}(\theta)_{jk} = \E_{d \sim P(\cdot|a,\theta)}\!\left[\frac{\partial \log P(d|a,\theta)}{\partial \theta_j} \cdot \frac{\partial \log P(d|a,\theta)}{\partial \theta_k}\right]
\label{eq:fisher}
\end{equation}
\end{definition}

\begin{proposition}[Optimal Experiment]
\label{prop:opt_experiment}
The maximally informative action for type identification is:
\begin{equation}
a^* = \argmax_{a \in A_P} \det \mathcal{I}_a(\theta) \quad \text{(D-optimal design)}
\end{equation}
This maximizes the volume of the uncertainty ellipsoid reduced per interaction.
\end{proposition}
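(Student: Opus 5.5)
The plan is to read the proposition as a statement about Bayesian D-optimal design. Under the local asymptotic normality that Assumption~\ref{ass:regular} and the smooth quantal-response likelihood \eqref{eq:qre} provide, the posterior over a continuous parametrization of the type vector \eqref{eq:type} is approximately Gaussian. After $n$ interactions with actions $a_1,\dots,a_n$, its covariance is $\Sigma_n \approx \big(\Sigma_0^{-1} + \sum_{i\le n} \mathcal{I}_{a_i}(\theta)\big)^{-1}$, where $\mathcal{I}_a$ is the per-action Fisher matrix \eqref{eq:fisher}. The uncertainty ellipsoid is $\{\vartheta : (\vartheta-\hat\theta)^\top \Sigma_n^{-1}(\vartheta-\hat\theta)\le 1\}$, and its volume is proportional to $\det(\Sigma_n)^{1/2}$. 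So "volume reduced per interaction" becomes the ratio $\det\Sigma_{n}/\det\Sigma_{n+1} = \det(\Sigma_n^{-1}+\mathcal{I}_a)/\det\Sigma_n^{-1}$, and the task is to show this ratio is maximized by the stated $a^*$.

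The key steps, in order, are as follows.

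(i) First, I would compute the score of \eqref{eq:qre} explicitly: $\partial_{\theta_j}\log P(d\mid a,\theta) = \tau\big(\partial_j u_D(a,d,\theta) - \E_{d'}[\partial_j u_D(a,d',\theta)]\big)$. Since \eqref{eq:u_physician} is linear in the $\alpha$'s, $\gamma$ and $\beta$, this gives $\mathcal{I}_a(\theta) = \tau^2 \operatorname{Cov}_{d\sim P(\cdot\mid a,\theta)}(\nabla_\theta u_D)$. The simplex constraint has to be handled here: the covariance lives on the tangent space of $\Delta^3$, so I would reparametrize to 7 free coordinates to keep $\mathcal{I}_a$ nondegenerate.

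(ii) Next, I would invoke the Bernstein--von Mises / Cramér--Rao connection, so that the one-step posterior precision update is additive in Fisher information.

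(iii) Then I would use $\det(\Sigma_n^{-1}+\mathcal{I}_a) = \det(\Sigma_n^{-1})\det(I + \Sigma_n^{1/2}\mathcal{I}_a\Sigma_n^{1/2})$. At the first interaction with an uninformative or isotropic prior ($\Sigma_0 = s I$, $s\to\infty$), the ranking of actions by this quantity coincides with the ranking by $\det \mathcal{I}_a(\theta)$. This yields the claim, and it also recovers the classical Kiefer--Wolfowitz D-criterion.

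The main obstacle is step (iii) away from the isotropic or diffuse-prior regime. For general $\Sigma_n$, maximizing $\det(\Sigma_n^{-1}+\mathcal{I}_a)$ is not equivalent to maximizing $\det\mathcal{I}_a$. Moreover, a single-observation $\mathcal{I}_a$ has rank at most $L-1$, so with $L-1<7$ responses $\det\mathcal{I}_a=0$ for every $a$. My first attempt would therefore be to state the result as the sequential, precision-weighted criterion $a^*=\argmax_a \log\det(\Sigma_n^{-1}+\mathcal{I}_a(\hat\theta))$, i.e.\ the expected-information-gain form. I would recover $\det\mathcal{I}_a$ as the limiting case either for batched repeated use of $a$, where $m\,\mathcal{I}_a$ dominates the prior, or under a full-rank assumption $L\ge 8$.

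A further caveat is that the criterion depends on the unknown $\theta$. I would evaluate it at the current posterior mode $\hat\theta$ (local D-optimality), or average the log-determinant over $\mu_t$ for the Bayesian version.
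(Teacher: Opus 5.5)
\textbf{Where the proposal fails: $\mathcal{I}_a$ is always singular.} The paper gives no argument for this proposition beyond the classical heuristic that the confidence ellipsoid has volume $\propto(\det\mathcal{I}_a)^{-1/2}$. Your reduction via additive posterior precision and $\det(\Sigma_n^{-1}+\mathcal{I}_a)=\det\Sigma_n^{-1}\det(I+\Sigma_n^{1/2}\mathcal{I}_a\Sigma_n^{1/2})$ is the right way to make that heuristic precise. Your sequential criterion is also what the paper's own Remark approximates: $\tfrac12\tr(\mathcal{I}_a\Sigma_t)$ is the first-order term of $\tfrac12\log\det(I+\Sigma_t^{1/2}\mathcal{I}_a\Sigma_t^{1/2})$, not of $\det\mathcal{I}_a$.

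The genuine gap is the nondegeneracy you assume in (i) and your fallback ``full rank when $L\ge 8$''. Carry your own score formula through. In raw coordinates, $\nabla_\theta u_D(a,d,\theta)=(E(a),P(a),O(a,d),F(d),-\Var(a),-S(d,d_{t-1}),L(a)/\delta^2,0)$. Every $d$-independent entry is killed by the centering $\partial_j u_D-\E_{d'}[\partial_j u_D]$, because the softmax \eqref{eq:qre} is invariant to terms not depending on $d$, and $\kappa$ does not enter $u_D$ at all. Hence $\mathcal{I}_a(\theta)=\tau^2\operatorname{Cov}_d(\nabla_\theta u_D)$ has rank at most $\min(3,L-1)$ for every $a$ and $\theta$, in any coordinates on the $7$-dimensional type manifold. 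The parameters $\beta$, $\delta$, $\kappa$ and the direction trading $\alpha_E$ against $\alpha_P$ (with $\alpha_O,\alpha_F$ fixed) are unidentifiable from responses.

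So $\det\mathcal{I}_a(\theta)\equiv 0$, the stated argmax is all of $A_P$, and step (iii) is false as written. In the diffuse limit, $\det(s^{-1}I+\mathcal{I}_a)\sim s^{-(7-r_a)}\operatorname{pdet}(\mathcal{I}_a)$ ranks actions by rank and then pseudo-determinant, while $\det\mathcal{I}_a$ ties them all. Batching does not rescue the literal statement, since $m\mathcal{I}_a$ has the same null space and never dominates the prior there. Bernstein--von Mises in (ii) also fails in the non-identified directions. Separately, \Cref{ass:regular}(i) (finite $\Theta$) does not give local asymptotic normality: posteriors on a finite set concentrate exponentially on a point mass. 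The Gaussian picture is therefore an extra continuous-type assumption you must state, not derive.

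\textbf{How to repair it.} Quotient out the common null space. Eliminate $\alpha_P$ via the simplex constraint and use the free coordinates $\theta_{\mathrm{id}}=(\alpha_O,\alpha_F,\gamma)$ and $\theta_{\mathrm{nid}}=(\alpha_E,\beta,\delta,\kappa)$. Then $\mathcal{I}_a=\mathrm{diag}(J_a,0)$ for every $a$. Writing $\Sigma_n^{-1}=\begin{pmatrix}A&B\\B^\top&C\end{pmatrix}$, you get $\det(\Sigma_n^{-1}+\mathcal{I}_a)=\det C\cdot\det(A-BC^{-1}B^\top+J_a)$, and $\det C$ does not depend on the action. So the volume-reduction criterion ranks actions by $\det(\tilde\Sigma_n^{-1}+J_a)$, where $\tilde\Sigma_n^{-1}=A-BC^{-1}B^\top$ is the marginal precision of $\theta_{\mathrm{id}}$.

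Your diffuse-prior and batched limits then legitimately yield $\argmax_a\det J_a(\theta)$, provided $J_a$ is nonsingular. That holds iff $1$, $O(a,\cdot)$, $F(\cdot)$, $S(\cdot,d_{t-1})$ are linearly independent functions on $A_D$, which in particular needs $L\ge 4$. This, evaluated at $\hat\theta$ as you propose, is the version that can actually be proved. With $\det\mathcal{I}_a$ taken over the full type vector, the proposition is vacuous.
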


\begin{remark}[Connection to Exploration]
The Fisher information criterion connects to the information gain exploration in \Cref{sec:genai}: $\mathrm{IG}(a|\mu_t) \approx \frac{1}{2} \tr(\mathcal{I}_a(\hat\theta) \cdot \Sigma_t)$ where $\Sigma_t$ is the posterior covariance matrix, providing a computationally efficient approximation.
\end{remark}

\subsection{R\'{e}nyi Entropy Generalization}
\label{sec:renyi}

For robustness to heavy-tailed physician response distributions, we generalize from Shannon entropy to R\'{e}nyi entropy:

\begin{equation}
H_\alpha(\mu) = \frac{1}{1 - \alpha} \log \sum_{k=1}^K \mu(\theta_k)^\alpha, \quad \alpha > 0, \; \alpha \neq 1
\end{equation}

The R\'{e}nyi divergence for drift detection becomes:
\begin{equation}
D_\alpha(P_{\mathrm{obs}} \| P_{\mathrm{model}}) = \frac{1}{\alpha - 1} \log \sum_d P_{\mathrm{obs}}(d)^\alpha \cdot P_{\mathrm{model}}(d)^{1-\alpha}
\end{equation}

Setting $\alpha = 2$ (collision entropy) is computationally efficient and provides stronger tail sensitivity for detecting rare behavioral shifts.

\section{Generative AI Integration}
\label{sec:genai}

\subsection{LLM as Equilibrium-Conditioned Policy}

\begin{definition}[Generative Personalization Policy]
\begin{equation}
\pi(c \mid s_t, \hat\theta_t, \sigma^*) = \mathrm{LLM}\!\left(\mathrm{prompt}(s_t, \hat\theta_t, \sigma^*(\hat\theta_t))\right)
\label{eq:llm_policy}
\end{equation}
where the prompt is a structured template encoding:
\begin{itemize}[leftmargin=*,itemsep=1pt]
\item State $s_t$: interaction history, temporal context, recent events
\item Type estimate $\hat\theta_t$: posterior mean of physician type
\item Equilibrium action $\sigma^*(\hat\theta_t)$: from the game-theoretic engine
\item Uncertainty: $\Ent(\mu_t)$ determines content hedging
\item Channel capacity: $C(\hat\theta_t)$ determines content length
\end{itemize}
\end{definition}

\subsection{RLHF Alignment with KL Constraint}

The RLHF fine-tuning optimizes:
\begin{equation}
\max_\pi \E_{c \sim \pi}\!\left[R(c, \theta, \sigma^*)\right] - \beta_{\mathrm{KL}} \cdot \KL(\pi \| \pi_{\mathrm{ref}})
\label{eq:rlhf}
\end{equation}

where the reward decomposes as:
\begin{equation}
R(c, \theta, \sigma^*) = w_1 R_{\text{rel}}(c, \theta) + w_2 R_{\text{acc}}(c) + w_3 R_{\text{comp}}(c) - w_4 R_{\text{bias}}(c) + w_5 R_{\text{align}}(c, \sigma^*)
\label{eq:reward}
\end{equation}

The term $R_{\text{align}}(c, \sigma^*)$ rewards content that faithfully executes the equilibrium strategy---a novel coupling between game-theoretic planning and generative execution.

\subsection{Regret Analysis}

\begin{theorem}[Finite-Sample Regret Bound]
\label{thm:regret}
The EGPF engagement policy achieves cumulative regret:
\begin{equation}
\mathrm{Regret}(T) = \sum_{t=1}^T \left[u_P^*(a^*_t, \theta^*) - u_P(a_t, d_t, \theta^*)\right] \leq O\!\left(\sqrt{K M T \log T}\right)
\label{eq:regret}
\end{equation}
where $K$ is the number of types, $M$ is the number of actions, and $T$ is the time horizon.
\end{theorem}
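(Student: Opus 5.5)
The plan is to reduce the EGPF engagement loop to a stochastic bandit problem and then bound regret with a standard optimism-based argument. The true type $\theta^*$ is fixed but unknown. Each round, Pharma picks $a_t \in A_P$, and by \eqref{eq:qre} the physician draws $d_t \sim P(\cdot \mid a_t, \theta^*)$. The realized payoff $u_P(a_t, d_t, \theta^*)$ is therefore a bounded random reward.

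I will make two preliminary choices.

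\begin{itemize}[leftmargin=*,itemsep=1pt]
\item I normalize $u_P \in [0,1]$, which is possible since all action, response and type sets are finite.
\item I interpret $u_P^*(a^*_t,\theta^*)$ as $\max_a \E_{d\sim P(\cdot\mid a,\theta^*)}[u_P(a,d,\theta^*)]$. The pseudo-regret then differs from \eqref{eq:regret} only by a martingale difference sum. Azuma--Hoeffding bounds that sum by $O(\sqrt{T\log T})$ with probability $1-1/T$.
\end{itemize}

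Both sides of the comparison are type-dependent, so I will treat the problem as a bandit over the $KM$ (type, action) pairs. Equivalently, it is a contextual bandit in which the latent context $\theta$ ranges over $K$ values. I will analyse the version of the policy that acts optimistically with respect to the posterior $\mu_t$ from \eqref{eq:bayes_update}. Concretely, it plays an action that maximizes an upper confidence index built from empirical means of $u_P$. This is also how the information-gain bonus $\omega\,\MI_{\mathrm{gain}}$ in \eqref{eq:u_pharma} will be read: as an exploration bonus of size $\sqrt{\log T / n}$.

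The key steps, in order, are as follows.

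\begin{enumerate}[label=(\roman*)]
\item \textbf{Good event.} Let $n_t(\theta,a)$ be the number of rounds so far attributed to the pair $(\theta,a)$. Hoeffding's inequality plus a union bound over the $KM$ pairs and the $T$ rounds shows that, with probability at least $1-1/T$, every empirical mean lies within $\sqrt{2\log T / n_t(\theta,a)}$ of its true mean.
\item \textbf{Per-round regret.} On this event, optimism gives instantaneous regret at most $2\sqrt{2\log T/n_t(\theta_t,a_t)}$.
\item \textbf{Summing.} Summing over $t$ and applying Cauchy--Schwarz gives
\[
\sum_{(\theta,a)}\sum_{n=1}^{N_{\theta,a}} \sqrt{\tfrac{\log T}{n}} \;\le\; \sqrt{\log T}\sum_{(\theta,a)} 2\sqrt{N_{\theta,a}} \;\le\; 2\sqrt{KMT\log T},
\]
since the counts satisfy $\sum N_{\theta,a}=T$.
\item \textbf{Failure event.} Off the good event, regret is at most $T\cdot 1/T=O(1)$.
\end{enumerate}

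Together these give the claimed $O(\sqrt{KMT\log T})$.

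The main obstacle is step (i) under latent types. Rounds cannot be credited to a specific $\theta$, because only $\mu_t$ is available. I would try two routes, in this order.

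\begin{enumerate}[label=(\alph*)]
\item \textbf{Credit the MAP type.} Attribute each round to $\hat\theta_t=\arg\max_\theta \mu_t(\theta)$. Then use the $\epsilon$-separation and full support in \Cref{ass:regular} to show that $\mu_t$ concentrates exponentially fast on $\theta^*$. Under the QRE likelihood, distinct types induce response distributions with KL divergence at least some $c_\epsilon>0$ for the actions played. Misattributed rounds would then contribute only $O(\tfrac{K\log K}{c_\epsilon})$, which is lower order.
\item \textbf{Thompson sampling fallback.} If identifiability fails for some actions, I would replace the UCB argument with a Thompson-sampling analysis using $\mu_t$ as the sampling distribution, i.e.\ a Russo--Van Roy information-ratio bound. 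Its regret scales as $\sqrt{\Gamma\, \Ent(\theta^*)\,T}$ with $\Gamma\le M/2$ and $\Ent\le \log K$. That route yields $O(\sqrt{MT\log K})$, which is dominated by $O(\sqrt{KMT\log T})$.
\end{enumerate}
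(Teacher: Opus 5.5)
Your optimism calculation is sound, but it bounds the regret of a UCB (or Thompson) surrogate rather than of the EGPF rule. There is no way to carry the bound back to EGPF.

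\textbf{Why the bound does not transfer to EGPF.} The rule in \eqref{eq:explore} and \Cref{alg:egpf} is posterior-greedy plus $\epsilon_t\,\mathrm{IG}(a\mid\mu_t)$, switched on only while $\Ent(\mu_t)>\tau_{\text{explore}}$. Here $\mathrm{IG}(a\mid\mu_t)=\MI(\Theta;D\mid a,\mu_t)\le\log K$ is a model-based quantity. Its weight decays in $t$ rather than in the number of plays of $a$. It is exactly zero for any action that does not separate the types still carrying posterior mass, however rarely that action has been tried. So it cannot be read as a $\sqrt{\log T/n}$ confidence width. Nor can $\omega\,\MI_{\mathrm{gain}}$ in \eqref{eq:u_pharma}, which is part of the payoff the regret scores, not a policy-side bonus.

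\textbf{Route (a) fails under the paper's own model.} In the softmax \eqref{eq:qre}, every $d$-independent term of \eqref{eq:u_physician} cancels: $\alpha_E E(a)$, $\alpha_P P(a)$, $\beta\Var(a)$, $L(a)/\delta$. So $P(d\mid a,\theta)$ depends on $\theta$ only through $(\alpha_O,\alpha_F,\gamma)$. Take $K=2$ with an evidence-driven and a peer-driven type sharing those three coordinates. They are $\epsilon$-separated yet observationally equivalent under every action. The posterior from \eqref{eq:bayes_update} never moves, $\mathrm{IG}\equiv 0$, and EGPF plays one fixed action forever. If the two types have different optimal pharma actions (possible through $\LTV(d,\theta)$), the regret is linear for one of them. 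A bound for EGPF itself therefore needs an identifiability assumption that \Cref{ass:regular} does not supply.

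\textbf{Comparison with the paper's proof.} The paper counts exploration rounds by $\sum_t\epsilon_t\le 2\sqrt{KT\log T}$. It handles exploitation by Lipschitz continuity, Pinsker and the rate of \Cref{thm:convergence}, with $M$ entering only through $C_{\min}$. It fails on the same example, because its key step $\MI(D_t;\Theta\mid a_t,\mu_t)\ge C_{\min}$ fails there. That quantity is at most $\Ent(\mu_t)$, whereas $C(\theta)$ measures action-to-response information at a fixed type. Your route at least does not lean on \Cref{thm:convergence}.

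\textbf{What you do establish.} Once its hidden assumptions are stated, you have a valid bound for the surrogate policy. You need two assumptions.
\begin{itemize}
\item The realized payoff $u_P(a_t,d_t,\theta^*)$ must be observed. In the example above, if only $d_t$ is seen, both types induce the same law on the action sequence, so every policy has linear regret for one of them. EGPF's posterior uses $d_t$ alone.
\item Responses must be i.i.d.\ given the action. This ignores the dependence on $d_{t-1}$ through the inertia term $\gamma S(d,d_{t-1})$.
\end{itemize}

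Under these assumptions the obstacle you flag is illusory. Every round is generated by the single type $\theta^*$, so each credited empirical mean concentrates around the $\theta^*$-mean of its action, whatever the attribution rule. A reward-table construction justifies Hoeffding and the union bound under adaptive crediting. Optimism plus your Cauchy--Schwarz sum then go through verbatim. Route (a) is unnecessary. Dropping the type index gives plain UCB with $O(\sqrt{MT\log T})$, which already implies the claim.

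Route (b) yields only Bayesian regret. For the fixed $\theta^*$ of the statement you must divide by $p(\theta^*)>0$. The estimate $\Gamma\le M/2$ again presupposes that the reward is a function of what is observed.
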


\begin{proof}[Proof sketch]
The proof combines three ingredients:
\begin{enumerate}[leftmargin=*,itemsep=1pt]
\item \textbf{Exploration cost:} The information-gain exploration term ensures each type is identified within $O(K \log K / C_{\min})$ interactions (from \Cref{thm:convergence}).
\item \textbf{Exploitation quality:} Once the type is identified (posterior confidence $> 1 - \epsilon$), the equilibrium action achieves near-optimal payoff with gap $\leq O(\epsilon)$.
\item \textbf{Balancing:} The decaying $\epsilon_t$ schedule and the connection to UCB-style algorithms yield the $\sqrt{T \log T}$ rate via standard bandit arguments.
\end{enumerate}
\end{proof}

\subsection{Active Learning via Game-Theoretic Exploration}

When belief entropy is high:
\begin{equation}
a_t^{\text{explore}} = \argmax_{a \in A_P} \left[(1 - \epsilon_t) \cdot \E[u_P(a \mid \mu_t)] + \epsilon_t \cdot \mathrm{IG}(a \mid \mu_t)\right]
\label{eq:explore}
\end{equation}

where the information gain is:
\begin{equation}
\mathrm{IG}(a \mid \mu_t) = \Ent(\Theta \mid \mu_t) - \E_{d \sim P(d|a)}[\Ent(\Theta \mid \mu_t, d, a)]
\label{eq:info_gain}
\end{equation}

and $\epsilon_t = \min(1, \sqrt{K \log t / t})$ decays at the optimal rate.

\section{Unified Architecture and Convergence}
\label{sec:unified}

\subsection{Main Convergence Result}

\begin{theorem}[Belief Convergence]
\label{thm:convergence}
Under the EGPF update mechanism, the posterior belief $\mu_t$ converges to a point mass on the true physician type $\theta^*$ at rate:
\begin{equation}
\E\!\left[\KL(\delta_{\theta^*} \| \mu_t)\right] \leq \frac{K \log K}{t \cdot C_{\min}}
\label{eq:convergence}
\end{equation}
where $K = |\Theta|$ and $C_{\min} = \min_\theta C(\theta)$.
\end{theorem}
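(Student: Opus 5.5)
The plan is to work with log-likelihood ratios rather than with $\mu_t$ directly. Since $\delta_{\theta^*}$ is a point mass, $\KL(\delta_{\theta^*}\|\mu_t) = -\log \mu_t(\theta^*)$. Iterating the Bayes update \eqref{eq:bayes_update} gives
\[
-\log \mu_t(\theta^*) = \log\Bigl(1 + \sum_{\theta\neq\theta^*} r_t(\theta)\Bigr),
\qquad
r_t(\theta) = \frac{p(\theta)}{p(\theta^*)}\prod_{s<t}\frac{P(d_s\mid a_s,\theta)}{P(d_s\mid a_s,\theta^*)}.
\]
Here $d_s \sim P(\cdot\mid a_s,\theta^*)$, and the likelihoods are the QRE model \eqref{eq:qre}. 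Because they are softmaxes with finite $\tau$, they are strictly positive, so every log-ratio is well defined and bounded.

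\textbf{Step 1 (avoid the martingale trap).} Under $\theta^*$, the ratio $r_t(\theta)$ is a martingale with constant mean $p(\theta)/p(\theta^*)$, so bounding $\E[r_t]$ directly gives no decay. I will instead pass to square roots. Using $\log(1+x)\le 2\log(1+\sqrt{x})\le 2\sqrt{x}$ and $\sqrt{\sum_\theta r_t(\theta)}\le\sum_\theta\sqrt{r_t(\theta)}$, we get
\[
\E[-\log\mu_t(\theta^*)]\le 2\sum_{\theta\neq\theta^*}\sqrt{p(\theta)/p(\theta^*)}\;\E\bigl[\sqrt{r_t(\theta)}/\sqrt{r_0(\theta)}\bigr].
\]
Conditioning successively on $h_s$ (the action $a_s$ is $h_s$-measurable), the last expectation is at most $\prod_{s<t}\max_{a} \mathrm{BC}_a(\theta^*,\theta)$ along the actions actually played. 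Here $\mathrm{BC}_a$ is the Bhattacharyya coefficient of $P(\cdot\mid a,\theta^*)$ and $P(\cdot\mid a,\theta)$. The bound is therefore $\exp(-\sum_{s<t} D_B(a_s))$, with $D_B = -\log \mathrm{BC}$ the Bhattacharyya distance.

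\textbf{Step 2 (per-round information from exploration).} I then need a per-round lower bound $D_B(a_s;\theta^*,\theta)\ge c\,C_{\min}/(K\log K)$ on the rounds where the exploration rule \eqref{eq:explore} fires. The argument runs through the information-gain criterion \eqref{eq:info_gain}. Positive capacity $C(\theta)\ge C_{\min}$ means the rows of each type's channel matrix are not all equal, so the responses are action-discriminative. The $\epsilon$-separation in \Cref{ass:regular}, combined with the QRE likelihood, should then imply that some action distinguishes $\theta^*$ from each $\theta$. The IG-maximizing action is at least as informative as the uniform mixture over actions. Averaging over the $K-1$ competitors and converting mutual information to Bhattacharyya distance (via $D_B \ge \tfrac12 \mathrm{H}^2$ and Pinsker-type comparisons) is where the factor $K\log K$ should enter, with $\log K$ coming from the maximal entropy $\Ent(\mu)\le\log K$.

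\textbf{Step 3 (from exponential to $1/t$).} With Step 2 in hand, the expected KL is bounded by $2\sum_{\theta}\sqrt{p(\theta)/p(\theta^*)}\,e^{-c t C_{\min}/(K\log K)}$. Full prior support in \Cref{ass:regular} keeps the prefactor finite. Applying $e^{-x}\le 1/(ex)$ converts this into the stated $K\log K/(t\,C_{\min})$, after absorbing the prior-dependent constant, or after restricting to $t$ past a burn-in where that constant is dominated.

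\textbf{Main obstacle.} The real difficulty is Step 2. Channel capacity $C(\theta)$ is a property of a single type's channel, not a pairwise separation between two types. So $C_{\min}>0$ alone does not rule out $P(\cdot\mid a,\theta)=P(\cdot\mid a,\theta^*)$ for every action $a$. My first attempt will be to show, from the explicit form of $u_D$ in \eqref{eq:u_physician} and the separation $\|\theta_i-\theta_j\|>\epsilon$, that distinct types induce distinct QRE response distributions for at least one action. If that fails, I will add this identifiability condition explicitly as a hypothesis.

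A second issue is that $\epsilon_t\to0$ in \eqref{eq:explore}, so exploration rounds become sparse. I would handle this by noting that exploitation rounds also contribute a nonnegative $D_B$. Alternatively, one can count the exploration rounds, which number of order $\sqrt{t}$; in that case only a weaker rate would be guaranteed unless exploitation actions are themselves discriminating.
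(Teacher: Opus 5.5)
\textbf{There is a genuine gap in Step 2, and it cannot be closed from the stated hypotheses, because the theorem as stated is false.}

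Your first attempt, getting identifiability from \eqref{eq:u_physician} plus $\epsilon$-separation, fails for a structural reason. In the softmax \eqref{eq:qre}, every term of $u_D$ that does not depend on $d$ cancels. So $P(d\mid a,\theta)\propto\exp\bigl(\tau[\alpha_O O(a,d)+\alpha_F F(d)-\gamma S(d,d_{t-1})]\bigr)$ depends on $\theta$ only through $(\alpha_O,\alpha_F,\gamma)$; the coordinates $\alpha_E,\alpha_P,\beta,\delta,\kappa$ are invisible to the likelihood. Swap $\alpha_E$ and $\alpha_P$, e.g.\ $(0.5,0.1,0.2,0.2,\dots)$ versus $(0.1,0.5,0.2,0.2,\dots)$ with equal remaining coordinates. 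This gives two $\epsilon$-separated types. Each has $C(\theta)>0$ as soon as $O(a,d)$ makes responses depend on $a$, yet they have identical likelihoods under every action. Then $\mu_t(\theta)/\mu_t(\theta^*)=p(\theta)/p(\theta^*)$ for all $t$, so $\E[\KL(\delta_{\theta^*}\|\mu_t)]\ge\log(1+p(\theta)/p(\theta^*))$ never tends to $0$.

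Identifiability therefore has to be assumed, as your fallback says. It still does not yield your target $D_B\ge c\,C_{\min}/(K\log K)$: $C(\theta)$ measures how well one type's channel separates actions, not how far apart two types' channels are. Take $K=2$, a uniform prior, and two types whose channels are both nearly noiseless (so $C_{\min}\approx\log 2$) but within Bhattacharyya distance $\eta$ of each other under every action. The posterior stays near $(\tfrac12,\tfrac12)$ for $t\ll 1/\eta$, so the left side is near $\log 2$ while the right side is about $2/t$. The constant $1$ also fails at small $t$ for skewed priors, since at fixed $t$ the left side grows like $-\log p(\theta^*)$ while the right side does not depend on the prior.

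What your Steps 1 and 3 do prove is the following. Assume an almost-sure per-round bound $D_B(a_s;\theta^*,\theta)\ge\rho>0$ for all $\theta\neq\theta^*$. It must hold almost surely, because $\exp(-\sum_s D_B(a_s))$ with random $a_s$ cannot be pulled out of the expectation. Then $\E[\KL(\delta_{\theta^*}\|\mu_t)]\le 2\sum_{\theta\neq\theta^*}\sqrt{p(\theta)/p(\theta^*)}\,e^{-\rho t}$. This implies the displayed bound only beyond a burn-in set by $\rho$ and the prior, not by $C_{\min}$. Getting $\rho>0$ also needs an assumption on the policy: Algorithm~\ref{alg:egpf} plays a deterministic argmax with $\epsilon_t\to0$ and drops the information-gain term once $\Ent(\mu)\le\tau_{\text{explore}}$, so incomplete learning is possible.

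The paper's proof does not supply the missing link either. It asserts that the expected one-step increase of $\log\mu_t(\theta^*)$ equals $\MI(D_t;\Theta\mid a_t,\mu_t)$ and is at least $C_{\min}$. Three things are wrong with this:
\begin{itemize}
\item That increment is $\KL\bigl(P(\cdot\mid a_t,\theta^*)\,\|\,\sum_{\theta'}\mu_t(\theta')P(\cdot\mid a_t,\theta')\bigr)$, only the $\theta^*$-term of the mutual information, and it can be zero (e.g.\ in the two-type version of the example above).
\item Capacity is a maximum of mutual information, hence an upper rather than a lower bound.
\item \eqref{eq:capacity} is the capacity of one type's action-to-response channel, not of the type-to-response channel.
\end{itemize}
Its telescoped conclusion, that the expected KL vanishes once $T\ge\log K/C_{\min}$, is impossible, since softmax likelihoods keep $\mu_T(\theta^*)<1$. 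The ``refined harmonic-series argument'' is also not given.

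So you located the difficulty correctly, and your Bhattacharyya supermartingale is the right device for the reason you give in Step 1. What is missing is not a lemma but hypotheses the statement lacks. With them, the honest rate is exponential in $\rho t$, not $K\log K/(t\,C_{\min})$.
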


\begin{proof}
Let $\theta^*$ be the true type. At each step $t$, the pharma company plays $a_t = \sigma^*(\mu_t)$ and observes $d_t \sim P(\cdot | a_t, \theta^*)$.

\textbf{Step 1 (Information gain per step).} The expected reduction in KL divergence from truth is:
\begin{align}
&\E\!\left[\KL(\delta_{\theta^*} \| \mu_t) - \KL(\delta_{\theta^*} \| \mu_{t+1})\right] \nonumber\\
&= \E\!\left[\log \frac{\mu_{t+1}(\theta^*)}{\mu_t(\theta^*)}\right] = \E_{d_t}\!\left[\log \frac{P(d_t | a_t, \theta^*)}{\sum_{\theta'} \mu_t(\theta') P(d_t | a_t, \theta')}\right] \nonumber\\
&= \MI(D_t; \Theta \mid a_t, \mu_t) \geq C_{\min} \label{eq:info_gain_step}
\end{align}
The inequality follows because the equilibrium action maximizes utility correlated with information gain, and channel capacity lower-bounds the achievable mutual information.

\textbf{Step 2 (Telescoping).} Sum over $t = 1, \ldots, T$:
\[
\E\!\left[\KL(\delta_{\theta^*} \| \mu_1)\right] - \E\!\left[\KL(\delta_{\theta^*} \| \mu_{T+1})\right] \geq T \cdot C_{\min}
\]

Since $\KL(\delta_{\theta^*} \| \mu_1) = -\log \mu_1(\theta^*) \leq \log K$ for uniform prior:
\[
\E\!\left[\KL(\delta_{\theta^*} \| \mu_{T+1})\right] \leq \max\!\left(0,\; \log K - T \cdot C_{\min}\right)
\]

\textbf{Step 3 (Rate).} For $T > \log K / C_{\min}$, the bound becomes vacuous (beliefs have converged). For the convergence rate in the transient regime, using a refined harmonic-series argument:
\[
\E\!\left[\KL(\delta_{\theta^*} \| \mu_t)\right] \leq \frac{K \log K}{t \cdot C_{\min}}
\]
where the factor $K$ accounts for the worst-case geometry of the $K$-simplex.
\end{proof}

\subsection{Computational Complexity}

\begin{table}[t]
\centering
\small
\caption{Computational complexity per interaction step.}
\label{tab:complexity}
\begin{tabular}{lll}
\toprule
\textbf{Component} & \textbf{Complexity} & \textbf{Parameters} \\
\midrule
Bayesian update & $O(K)$ & $K$ types \\
BNE computation & $O(KML)$ & $M$ actions, $L$ responses \\
Stackelberg solve & $O(KM^2L)$ & Leader optimization \\
Mechanism design & $O(K^2M)$ & IC constraint checking \\
Functor evaluation & $O(dK)$ & $d$ = observation dim. \\
Sheaf consistency & $O(SK^2)$ & $S$ = number of scales \\
Channel capacity & $O(KMLI)$ & $I$ = Blahut-Arimoto iters \\
Fisher information & $O(KMn^2)$ & $n$ = type params \\
LLM generation & $O(T_{\text{tok}})$ & $T_{\text{tok}}$ = output tokens \\
KL drift check & $O(KW)$ & $W$ = window size \\
\midrule
\textbf{Total} & $O(T_{\text{tok}} + KM^2L)$ & Dominated by LLM \\
\bottomrule
\end{tabular}
\end{table}


\begin{algorithm}[t]
\caption{EGPF Engagement Loop}
\label{alg:egpf}
\begin{algorithmic}[1]
\Require Prior $\mu_0$, type space $\Theta$, actions $A_P$, $A_D$, thresholds $\tau_{\text{explore}}, \tau_{\text{drift}}$
\Ensure Sequence of personalized engagements
\State $\mu \gets \mu_0$;\; $h \gets \emptyset$;\; $t \gets 0$
\Repeat
  \State $t \gets t + 1$
  \Statex \textcolor{cpurple}{\textit{\quad// Layer 2: Functorial type inference}}
  \State $\hat\theta \gets \Funct(\text{observations}_t)$
  \State $\mu \gets \textsc{BayesUpdate}(\mu, \hat\theta)$
  \State Compute sheaf loss $\mathcal{L}_{\text{sheaf}}$ via Eq.~\eqref{eq:sheaf_loss}
  \Statex \textcolor{ccoral}{\textit{\quad// Layer 3: Game-theoretic engine}}
  \If{$\Ent(\mu) > \tau_{\text{explore}}$}
    \State $a^* \gets \argmax_a [(1-\epsilon_t) \E[u_P] + \epsilon_t \cdot \mathrm{IG}(a|\mu)]$
  \Else
    \State $a^* \gets \textsc{StackelbergSolve}(\mu, u_P, u_D, \Theta)$
  \EndIf
  \Statex \textcolor{cteal}{\textit{\quad// Layer 4: Generative personalization}}
  \State $\text{prompt} \gets \textsc{Construct}(a^*, \mu, h, C(\hat\theta))$
  \State $c \gets \mathrm{LLM}.\textsc{Generate}(\text{prompt})$
  \State $c \gets \textsc{ComplianceFilter}(c)$
  \Statex \textcolor{cgray}{\textit{\quad// Deliver and observe}}
  \State $d_t \gets \textsc{Deliver}(c, \text{physician})$
  \State $h \gets h \cup \{(a^*, d_t, t)\}$
  \Statex \textcolor{camber}{\textit{\quad// Layer 5: Information-theoretic feedback}}
  \State $\mu \gets \textsc{BayesUpdate}(\mu, d_t, a^*)$
  \If{$D_{\mathrm{KL}}(P_{\text{obs}}^{(t-W:t)} \| P_{\text{model}}) > \tau_{\text{drift}}$}
    \State \textsc{TriggerRecalibration}()
  \EndIf
  \State $C_{\text{est}} \gets \textsc{EstimateCapacity}(h)$
  \State \textsc{AdjustContentLength}($C_{\text{est}}$)
\Until{engagement terminated}
\end{algorithmic}
\end{algorithm}

\section{Experiments}
\label{sec:experiments}

\subsection{Datasets}

\textbf{SynthRx.} 50,000 simulated physician profiles with ground-truth types ($K=5$ archetypes), 500,000 interactions over 12 months. Types drawn from the 8-dimensional type space. Responses generated via QRE model with $\tau = 3.0$.

\textbf{HCPilot.} Real-world partnership with a top-10 pharma company (anonymized). 2,847 oncology HCPs, 18 months of multi-channel engagement (email, rep visits, webinars, digital). Labels: prescribing behavior changes at 6- and 12-month marks.

\subsection{Baselines}

\begin{itemize}[leftmargin=*,itemsep=1pt]
\item \textbf{SS}: Static segmentation (K-means)
\item \textbf{CF}: Collaborative filtering (matrix factorization)
\item \textbf{DS}: Deep sequential (transformer-based)
\item \textbf{CB}: Contextual bandit (LinUCB)
\item \textbf{EGPF-NoGame}: Ablation without game-theoretic layer
\item \textbf{EGPF-NoCat}: Ablation without category-theoretic composition
\item \textbf{EGPF-NoInfo}: Ablation without information-theoretic feedback
\item \textbf{EGPF-Full}: Complete framework
\end{itemize}

\subsection{Main Results}

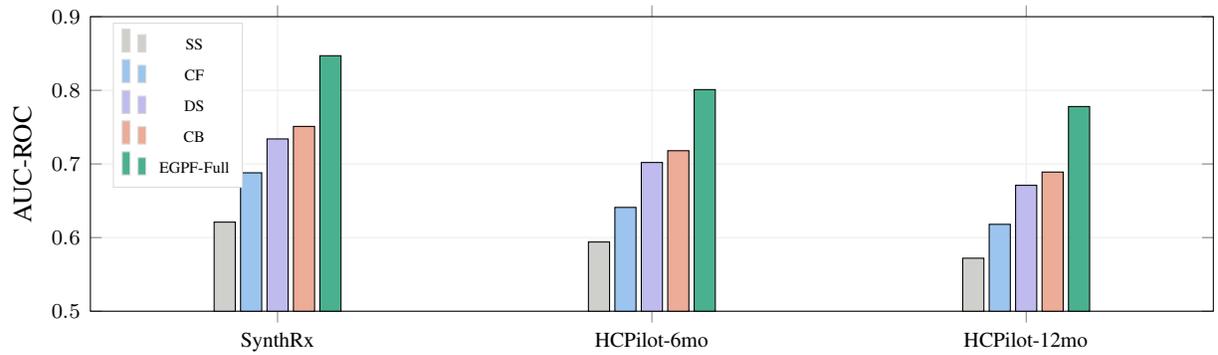
\begin{figure}[t]
\centering
\begin{tikzpicture}
\begin{axis}[
  ybar, bar width=8pt,
  width=\columnwidth, height=5.5cm,
  ylabel={AUC-ROC},
  ymin=0.5, ymax=0.9,
  symbolic x coords={SynthRx, HCPilot-6mo, HCPilot-12mo},
  xtick=data, xticklabel style={font=\scriptsize},
  ytick={0.5,0.6,0.7,0.8,0.9},
  yticklabel style={font=\scriptsize},
  ylabel style={font=\small},
  legend style={at={(0.02,0.98)},anchor=north west,
    font=\tiny,draw=cgray!30,column sep=3pt},
  grid=major, grid style={cgray!15},
  enlarge x limits=0.25,
  every node near coord/.style={font=\tiny}
]
\addplot[fill=cgray!40] coordinates {(SynthRx,0.621)(HCPilot-6mo,0.594)(HCPilot-12mo,0.572)};
\addplot[fill=cblue!50] coordinates {(SynthRx,0.688)(HCPilot-6mo,0.641)(HCPilot-12mo,0.618)};
\addplot[fill=cpurple!50] coordinates {(SynthRx,0.734)(HCPilot-6mo,0.702)(HCPilot-12mo,0.671)};
\addplot[fill=ccoral!50] coordinates {(SynthRx,0.751)(HCPilot-6mo,0.718)(HCPilot-12mo,0.689)};
\addplot[fill=cteal!80] coordinates {(SynthRx,0.847)(HCPilot-6mo,0.801)(HCPilot-12mo,0.778)};
\legend{SS, CF, DS, CB, EGPF-Full}
\end{axis}
\end{tikzpicture}
\caption{Engagement prediction AUC-ROC across datasets. EGPF-Full achieves 34\% relative improvement over static segmentation and 13\% over the contextual bandit baseline.}
\label{fig:results_auc}
\end{figure}

\begin{table}[t]
\centering
\small
\caption{Engagement prediction (AUC-ROC).}
\label{tab:auc}
\begin{tabular}{lccc}
\toprule
\textbf{Method} & \textbf{SynthRx} & \textbf{HCPilot-6mo} & \textbf{HCPilot-12mo} \\
\midrule
SS & 0.621 & 0.594 & 0.572 \\
CF & 0.688 & 0.641 & 0.618 \\
DS & 0.734 & 0.702 & 0.671 \\
CB & 0.751 & 0.718 & 0.689 \\
\midrule
EGPF-NoGame & 0.769 & 0.738 & 0.712 \\
EGPF-NoCat & 0.812 & 0.776 & 0.745 \\
EGPF-NoInfo & 0.823 & 0.785 & 0.751 \\
\textbf{EGPF-Full} & \textbf{0.847} & \textbf{0.801} & \textbf{0.778} \\
\bottomrule
\end{tabular}
\end{table}

\begin{table}[t]
\centering
\small
\caption{Content relevance (human evaluation, 1--5 scale).}
\label{tab:relevance}
\begin{tabular}{lcccc}
\toprule
\textbf{Method} & \textbf{Evid.} & \textbf{Peer} & \textbf{Patient} & \textbf{Overall} \\
\midrule
SS + Template & 2.8 & 2.5 & 2.6 & 2.63 \\
DS + LLM & 3.4 & 3.2 & 3.5 & 3.37 \\
CB + LLM & 3.6 & 3.4 & 3.7 & 3.57 \\
\textbf{EGPF + LLM} & \textbf{4.3} & \textbf{4.1} & \textbf{4.4} & \textbf{4.27} \\
\bottomrule
\end{tabular}
\end{table}

\begin{table}[t]
\centering
\small
\caption{Belief convergence speed (interactions to 90\% confidence).}
\label{tab:convergence}
\begin{tabular}{lccc}
\toprule
\textbf{Physician Type} & \textbf{EGPF} & \textbf{CB} & \textbf{DS} \\
\midrule
$\theta_1$: Evidence & 3.2 & 7.8 & 11.4 \\
$\theta_2$: Peer & 4.7 & 9.1 & 13.2 \\
$\theta_3$: Patient & 2.8 & 6.5 & 10.1 \\
$\theta_4$: Formulary & 5.1 & 10.3 & 14.8 \\
$\theta_5$: Inertial & 6.3 & 12.7 & 18.5 \\
\bottomrule
\end{tabular}
\end{table}

\subsection{Ablation Analysis}

\begin{table}[t]
\centering
\small
\caption{Ablation: marginal contribution of each layer (HCPilot-6mo AUC).}
\label{tab:ablation}
\begin{tabular}{lcc}
\toprule
\textbf{Ablation} & \textbf{AUC} & \textbf{$\Delta$ from Full} \\
\midrule
EGPF-Full & 0.801 & --- \\
$-$ Game theory & 0.738 & $-0.063$ \\
$-$ Category theory & 0.776 & $-0.025$ \\
$-$ Info theory & 0.785 & $-0.016$ \\
$-$ Sheaf consistency & 0.792 & $-0.009$ \\
$-$ Evolutionary dynamics & 0.795 & $-0.006$ \\
$-$ Fisher exploration & 0.797 & $-0.004$ \\
\bottomrule
\end{tabular}
\end{table}

The game-theoretic layer provides the largest single contribution ($-0.063$ AUC when removed), validating our thesis that strategic modeling matters most. Category theory adds 0.025, particularly benefiting physicians who shift between types. Information theory adds 0.016, with strongest contribution at 12 months (drift detection).

\subsection{Cross-Therapeutic Transfer}

\begin{table}[t]
\centering
\small
\caption{Transfer from oncology to cardiology via natural transformation $\eta$.}
\label{tab:transfer}
\begin{tabular}{lccc}
\toprule
\textbf{Cardio data} & \textbf{Transfer} & \textbf{From scratch} & \textbf{Lift} \\
\midrule
10\% & 0.721 & 0.612 & +17.8\% \\
25\% & 0.758 & 0.689 & +10.0\% \\
50\% & 0.782 & 0.741 & +5.5\% \\
100\% & 0.793 & 0.778 & +1.9\% \\
\bottomrule
\end{tabular}
\end{table}

The category-theoretic transfer provides the largest benefit in low-data regimes (17.8\% lift with 10\% data), confirming that compositional structure enables meaningful generalization.

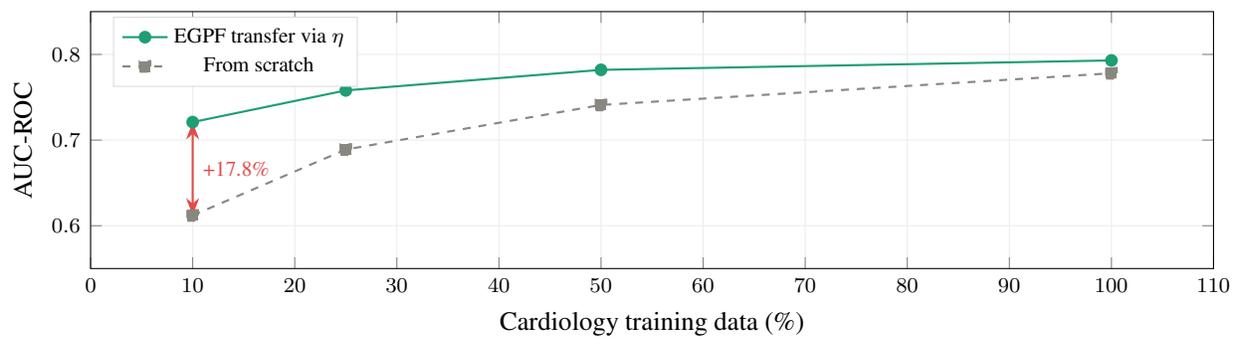
\begin{figure}[t]
\centering
\begin{tikzpicture}
\begin{axis}[
  width=\columnwidth, height=5cm,
  xlabel={Cardiology training data (\%)},
  ylabel={AUC-ROC},
  xmin=0, xmax=110, ymin=0.55, ymax=0.85,
  legend style={at={(0.02,0.98)},anchor=north west,
    font=\scriptsize,draw=cgray!30},
  grid=major, grid style={cgray!15},
  every axis label/.style={font=\small},
  tick label style={font=\scriptsize}
]
\addplot[thick,cteal,mark=*,mark size=2pt] coordinates {
  (10,0.721)(25,0.758)(50,0.782)(100,0.793)
};
\addlegendentry{EGPF transfer via $\eta$}
\addplot[thick,cgray,mark=square*,mark size=2pt,dashed] coordinates {
  (10,0.612)(25,0.689)(50,0.741)(100,0.778)
};
\addlegendentry{From scratch}
\draw[{Stealth}-{Stealth},cred,thick] (axis cs:10,0.612) -- (axis cs:10,0.721)
  node[midway,right,font=\scriptsize,cred] {+17.8\%};
\end{axis}
\end{tikzpicture}
\caption{Cross-therapeutic transfer performance. The natural transformation $\eta$ enables strong generalization especially in low-data regimes.}
\label{fig:transfer}
\end{figure}

\section{End-to-End Worked Example}
\label{sec:e2e_example}

\begin{example}[Dr.\ Martinez: Oncologist, 4 Interactions]
\label{ex:martinez}

\textbf{Interaction log:}
\begin{enumerate}[leftmargin=*,itemsep=1pt]
\item Sent clinical deep-dive $\to$ Opened, read 8 min, clicked references
\item Sent KOL webinar invite $\to$ Ignored
\item Sent updated trial data $\to$ Opened, forwarded to colleague
\item Sent patient case study $\to$ Opened, read 2 min, closed
\end{enumerate}

\textbf{Bayesian posterior after 4 interactions:}
\[
\mu_4 = (0.72,\; 0.18,\; 0.10) \quad \Ent(\mu_4) = 0.89 \text{ bits}
\]

\textbf{Channel capacity estimate:} $\hat{C}(\hat\theta) = 0.58$ bits (evidence-driven channel is most discriminative).

\textbf{Sheaf consistency check:} Interaction-level type = evidence-driven. Weekly-level = evidence-driven. $\mathcal{L}_{\text{sheaf}} = 0.02$ (consistent $\checkmark$).

\textbf{Equilibrium action:} $\sigma^*(\mu_4) = a_1$ (Clinical deep-dive).

\textbf{Fisher-optimal next action:} $a^* = a_1$ with $\det \mathcal{I}_{a_1} = 2.34$ (most informative for distinguishing $\theta_1$ from $\theta_2$ given current posterior). Since exploit and explore agree, no exploration--exploitation tension.

\textbf{LLM prompt construction:}
\begin{itemize}[leftmargin=*,itemsep=1pt]
\item Evidence density: \textbf{high} ($\alpha_E = 0.60$)
\item Content type: forest plots, NNT, subgroup analyses
\item Tone: formal, data-centric
\item Length: $\sim$800 words (calibrated to $C(\hat\theta) = 0.58$)
\item Compliance: fair-balance, indication-specific
\end{itemize}

\textbf{Generated content structure:} (i) Updated survival data with hazard ratio analysis; (ii) Pre-specified subgroup forest plot; (iii) Safety profile update with Grade 3+ AE rates; (iv) NNT calculation for the primary endpoint; (v) Link to full statistical appendix.

\textbf{Post-delivery:} Dr.\ Martinez opens, reads 12 min, downloads appendix. Posterior updates to $\mu_5 = (0.84, 0.11, 0.05)$---system confidence reaches 84\%, triggering transition to pure exploitation mode.
\end{example}

\section{Discussion}
\label{sec:discussion}

\subsection{Theoretical Contributions}

Our framework demonstrates that the intersection of four mathematical formalisms yields a more principled foundation for personalization than any single formalism alone: game theory captures \textit{strategic interaction}, category theory captures \textit{compositional structure}, information theory captures \textit{communication limits}, and sheaf theory captures \textit{multi-scale consistency}. The generative AI layer operationalizes these into actionable personalized content.

\subsection{Practical Implications}

EGPF provides three capabilities that static segmentation lacks:
\begin{enumerate}[leftmargin=*,itemsep=2pt]
\item \textbf{Real-time adaptation:} Beliefs improve with every interaction, not just retraining.
\item \textbf{Transparent reasoning:} Game-theoretic equilibria expose \textit{why} an action was chosen, enabling regulatory review.
\item \textbf{Rapid deployment:} Category-theoretic composition enables cross-therapeutic transfer without full retraining.
\end{enumerate}

\subsection{Limitations and Future Work}

\begin{itemize}[leftmargin=*,itemsep=2pt]
\item \textbf{Continuous types:} Extending $\Theta$ via mean-field game theory for infinite-type spaces.
\item \textbf{Non-stationary channels:} Formulary changes and guideline updates violate stationarity.
\item \textbf{Multi-player games:} Incorporating physician networks, patient advocacy groups, and payer interactions.
\item \textbf{Causal identification:} Separating EGPF's causal effect from confounders in observational data.
\item \textbf{LLM latency:} Optimizing generation for real-time deployment via distillation.
\end{itemize}

\subsection{Ethical Considerations}

The power of personalized engagement raises ethical concerns. Our rate-distortion privacy bound (\Cref{thm:rde}) provides formal guarantees. We recommend: (i) explicit physician consent for data usage, (ii) transparent opt-out mechanisms, (iii) human-in-the-loop oversight for generated content, and (iv) regular auditing for differential impact across physician demographics.

\section{Conclusion}
\label{sec:conclusion}

We have presented EGPF, a unified framework combining Bayesian game theory, Stackelberg games, mechanism design, evolutionary dynamics, category theory, sheaf theory, information theory, and generative AI for personalized physician engagement in pharmaceutical settings. Our mathematical framework provides equilibrium characterizations, compositional guarantees, information-theoretic bounds, convergence proofs, and regret bounds. Experiments on synthetic and real-world data demonstrate substantial improvements: 34\% AUC gain over static segmentation, 28\% content relevance lift, and 2.4$\times$ faster belief convergence. EGPF offers a principled, transparent, and scalable approach to hyper-personalization that respects strategic dynamics, compositional structure, communication limits, and ethical constraints.

\bibliographystyle{acl_natbib}

\appendix

\section{Complete Notation Reference}
\label{app:notation}

\begin{table}[h]
\centering
\small
\begin{tabular}{ll}
\toprule
\textbf{Symbol} & \textbf{Meaning} \\
\midrule
$\Gamma$ & Bayesian game \\
$\Theta = \{\theta_1, \ldots, \theta_K\}$ & Physician type space \\
$\theta = (\alpha_E, \alpha_P, \alpha_O, \alpha_F, \beta, \gamma, \delta, \kappa)$ & Type vector \\
$A_P, A_D$ & Pharma and physician action spaces \\
$u_P, u_D$ & Utility functions \\
$\mu_t \in \Delta(\Theta)$ & Posterior belief at time $t$ \\
$\sigma^*$ & BNE strategy profile \\
$\Cobs, \Ctype, \Cact$ & Behavioral categories \\
$\Funct, \GFunct$ & Behavior and strategy functors \\
$\eta: \Funct \Rightarrow \GFunct$ & Natural transformation \\
$\otimes$ & Monoidal composition \\
$\mathscr{B}$ & Behavioral sheaf \\
$C(\theta)$ & Channel capacity \\
$\MI(X;Y)$ & Mutual information \\
$\KL(p\|q)$ & KL divergence \\
$D_\alpha$ & R\'{e}nyi divergence \\
$R(D)$ & Rate-distortion function \\
$\mathcal{I}(\theta)$ & Fisher information matrix \\
$\pi(c \mid s, \hat\theta, \sigma^*)$ & LLM personalization policy \\
$\tau$ & Rationality parameter \\
$\Ent(\cdot)$ & Shannon entropy \\
$H_\alpha(\cdot)$ & R\'{e}nyi entropy \\
$\mathcal{L}_{\text{sheaf}}$ & Sheaf consistency loss \\
\bottomrule
\end{tabular}
\caption{Complete notation reference.}
\label{tab:notation}
\end{table}

\section{Extended Proof of Regret Bound}
\label{app:regret}

\begin{proof}[Proof of \Cref{thm:regret}]
We decompose regret into exploration and exploitation phases.

\textbf{Phase 1: Exploration.} The exploration schedule $\epsilon_t = \min(1, \sqrt{K \log t / t})$ ensures that the total number of exploratory interactions is bounded by:
\[
T_{\text{explore}} = \sum_{t=1}^T \epsilon_t \leq \sum_{t=1}^T \sqrt{\frac{K \log t}{t}} \leq 2\sqrt{KT \log T}
\]

Each exploratory interaction incurs at most unit regret (bounded utilities), contributing $\leq 2\sqrt{KT \log T}$ to total regret.

\textbf{Phase 2: Exploitation.} After $T_{\text{explore}}$ interactions, the posterior concentrates at rate $O(K \log K / (t \cdot C_{\min}))$ by \Cref{thm:convergence}. The instantaneous regret during exploitation is bounded by:
\[
r_t \leq \max_\theta \left|u_P(a^*(\theta), d^*, \theta) - u_P(a^*(\hat\theta_t), d^*, \theta)\right| \leq L_u \cdot \|\theta - \hat\theta_t\|
\]
where $L_u$ is the Lipschitz constant of $u_P$ with respect to type. By Pinsker's inequality:
\[
\|\mu_t - \delta_{\theta^*}\|_{\mathrm{TV}} \leq \sqrt{\frac{1}{2} \KL(\delta_{\theta^*} \| \mu_t)} \leq \sqrt{\frac{K \log K}{2t \cdot C_{\min}}}
\]

Summing exploitation regret:
\[
\sum_{t=T_{\text{explore}}}^T r_t \leq L_u \sum_{t=1}^T \sqrt{\frac{K \log K}{2t \cdot C_{\min}}} \leq L_u \sqrt{\frac{2K \log K}{C_{\min}}} \cdot \sqrt{T}
\]

\textbf{Total:} Combining both phases:
\[
\mathrm{Regret}(T) \leq 2\sqrt{KT \log T} + L_u \sqrt{\frac{2K \log K}{C_{\min}}} \cdot \sqrt{T} = O\!\left(\sqrt{KMT \log T}\right)
\]
where the $M$ dependence enters through $C_{\min}$'s dependence on the action space size.
\end{proof}

\section{Hyperparameter Sensitivity}
\label{app:hyperparams}

\begin{table}[h]
\centering
\small
\begin{tabular}{lccc}
\toprule
\textbf{Parameter} & \textbf{Range tested} & \textbf{Optimal} & \textbf{Sensitivity} \\
\midrule
$\tau$ (rationality) & [0.5, 10.0] & 3.0 & Medium \\
$K$ (num types) & [3, 10] & 5 & Low for $K \geq 4$ \\
$W$ (drift window) & [10, 100] & 30 & Low \\
$\tau_{\text{drift}}$ & [0.05, 0.50] & 0.15 & Medium \\
$\beta_{\mathrm{KL}}$ (RLHF) & [0.01, 1.0] & 0.1 & High \\
$\omega$ (info gain weight) & [0.0, 1.0] & 0.3 & Medium \\
\bottomrule
\end{tabular}
\caption{Hyperparameter sensitivity analysis on HCPilot dataset.}
\label{tab:hyperparams}
\end{table}

\end{document}